\newtheorem{theorem}{Theorem}
 \newtheorem{corollary}[theorem]{Corollary}
 \newtheorem{lemma}[theorem]{Lemma}
 \newtheorem{proposition}[theorem]{Proposition}
\newenvironment{proof}[1][Proof]{\noindent\textbf{#1.} }{\hfill $\Box$\\[2mm]} 
\newenvironment{proofsketch}[1][Proof sketch]{\noindent\textbf{#1.} }{\hfill $\Box$\\[2mm]} 
\def\lf{\tiny}
\def\nnll{\refstepcounter{linenumber}\lf\thelinenumber}
\newcounter{linenumber}
\def\Time{\mathbb{T}}
\def\D{\ensuremath{\mathcal{D}}}
\def\A{\ensuremath{\mathcal{A}}}
\def\B{\ensuremath{\mathcal{B}}}
\def\R{\ensuremath{\mathcal{R}}}
\def\I{\ensuremath{\mathcal{I}}}
\def\O{\ensuremath{\mathcal{O}}}
\def\E{\ensuremath{\mathcal{E}}}
\def\Nat{\ensuremath{\mathbb{N}}}
\def\fd{failure detector}
\def\cons{\textit{cons}}
\def\nbwf{\ensuremath{m}}
\def\fd{\ensuremath{S}}
\def\wf{\ensuremath{C}}
\def\nbfinal{\ensuremath{n}}
\newcommand{\correct}{\mathit{correct}}
\newcommand{\faulty}{\mathit{faulty}}
\newcommand{\infi}{\mathit{inf}}
\newcommand{\true}{\mathit{true}}
\newcommand{\false}{\mathit{false}}
\newcommand{\remove}[1]{}
\def\Nomega{\ensuremath{\neg\Omega}}
\def\Vomega{\ensuremath{\overrightarrow{\Omega}}}
\newcommand{\id}[1]{\mbox{\textit{#1}}}
\newcommand{\Leader}{\textit{Leader}}
\newcommand{\ignore}[1]{}
\begin{document}

\bibliographystyle{abbrv}


\title{Wait-Freedom with Advice} 

\author{
Carole Delporte-Gallet$^1$~~~~Hugues Fauconnier$^1$\\
\\
Eli Gafni$^2$~~~~Petr Kuznetsov$^3$
\\
\\
\\
\large $^1$ University Paris Diderot\\
\large $^2$ UCLA\\
\large $^3$TU Berlin/Deutsche Telekom Laboratories
}

\date{}

\maketitle

\begin{abstract}
We motivate and propose a new way of thinking about failure
detectors which allows us to define, quite surprisingly, what it means
to solve a distributed task 
\emph{wait-free}
\emph{using a failure detector}.
In our model, the system is composed of \emph{computation} processes that obtain
inputs and are supposed to output in a finite number of steps and
\emph{synchronization} processes that are subject to failures and can query a failure
detector.
We assume that, under the condition that  \emph{correct} synchronization processes take sufficiently many steps,   
they provide the computation processes with enough \emph{advice}  to solve the given task wait-free: 
every computation process outputs in a finite number of its own steps, regardless of the behavior of other
computation processes.
%
Every task can thus be characterized by the \emph{weakest} failure detector that
allows for solving it, and we show that every such
failure detector captures a form of set agreement.   
We then obtain a complete classification of tasks, 
including ones that evaded comprehensible characterization  
so far, such as renaming or weak symmetry breaking.
\end{abstract}

\remove{
\begin{abstract}
What does it mean to solve a distributed task?
We seem to understand what it means for a solution to be \emph{safe}:
the outputs should match the inputs with respect to the task
specification.
But what about \emph{progress}?
The most natural progress condition is \emph{wait-freedom}: every
participating process is guaranteed an output, as long as it takes
sufficiently many steps, regardless of the behavior of other processes.  
Note that wait-freedom is defined with no distinction between correct and
faulty processes.

Alas, most interesting tasks are not solvable wait-free assuming that
processes can only take read and write steps.
One way to circumvent these impossibilities is to model very slow
processes as \emph{crashed} and to use \emph{failure detectors} to
make sure that every \emph{correct} process eventually outputs. 
But even a solution that tolerates any number of failures
is not \emph{sensu stricto} wait-free, since the progress of 
a correct process may depend on how other correct processes behave. 

In this paper, we propose a new way of thinking about failure
detectors which allows us to define, quite surprisingly, what it means to solve a task wait-free
\emph{using a failure detector}.
The system is composed of \emph{computation} processes that obtain
inputs and are supposed to output in a finite number of steps and
\emph{synchronization} processes that are subject to failures and can query a failure
detector.
Then an algorithm 
solves a task $T$ using a failure detector $\D$ if it guarantees that every computation
process outputs after a finite number of its own steps, assuming that
every correct synchronization process takes sufficiently many steps.    
Every task can thus be characterized by the weakest failure detector that
allows for solving it, and we show that every such
failure detector captures a form of set agreement.   
We obtain a complete classification of tasks, 
including ones that evaded comprehensible characterization  
so far, such as renaming or weak symmetry breaking.
\end{abstract}
}


\thispagestyle{empty}

\clearpage

\pagenumbering{arabic}

\section{Introduction}

\paragraph{What does it mean to solve a task?} A distributed task for
a set of processes can be seen as a function 
that maps an input vector to an output vector, one value per process. 
It is easy to reason about correctness of a task solution by matching
the outputs to the inputs with respect to the task specification. 
When it comes to progress, however, it is getting less trivial.  

On the surface, it is desirable 
to expect that the input vector is exactly matched by the output
vector, i.e., every participating process obtains  an
output.\footnote{A process is considered participating if it takes at
  least one steps in the computation.}
Unfortunately, in asynchronous or partially synchronous systems where relative processes'
speeds are unbounded or very large, ensuring this property would
require very long waiting. 
A more natural \emph{wait-freedom} property requires that
any participating process that takes sufficiently many steps obtains
an output,  ``regardless of execution speeds of other processes''~\cite{Her91}.
A wait-free task solution thus allows for treating the requirement
``a given participant outputs'' as a \emph{liveness}
property~\cite{AS85}: every execution has an extension in which the
requirement is met.
Naturally, wait-freedom assumes no notion of process \emph{failures}: a process that
does not take steps for a while in a given execution, always has a
chance to wake up and take enough steps to output.
%

\paragraph{Failure detectors.} Unfortunately, very few tasks can be solved
wait-free in the basic read-write shared-memory model~\cite{FLP85,LA87,HS99,SZ00,BG93b,BGLR01}.
The \emph{failure detector} abstraction~\cite{CHT96,CT96} 
was proposed to circumvent these impossibilities. 
Intuitively, a failure detector provides each process with some (possibly 
incomplete and inaccurate) information about the current \emph{failure pattern}, e.g., 
a list of processes predicted to take only finitely many steps in the
current execution. 
The failure detector abstraction gives a language for capturing the
weakest support from the system one may require in order
to solve a given task. 
This gave many interesting insights on the nature of ``wait-free unsolvable'' tasks, 
starting from the celebrated result by Chandra et al. on the weakest
failure detector for consensus~\cite{CHT96}.\footnote{Informally, $\D$ is the
  weakest  failure detector to solve a task $T$ if it (1) solves $T$ and (2) can
  be deduced from  any failure detector that solves $T$.}     

A solution of the task using a failure detector  guarantees that  every
\emph{correct} (a process that is predicted to take infinitely many
steps by the failure pattern) eventually obtains an output. 
The progress of each process may thus depend on the behavior of other
correct processes, and therefore failure detector-based algorithm cannot be wait-free. 
Consequently, since the failure pattern is introduced as a part of a
run, we cannot treat individual progress as a liveness property
anymore: a process is not allowed to take steps after it crashes.




\paragraph{Wait-freedom with advice.}
But can we think of a system where a ``hard'' task can be solved so that progress of a
process does not depend on the execution speeds of other processes?
A straightforward way to achieve this is to assume that the processes receive
\emph{advice} from an external oracle, and an immediate question is
what is the weakest oracle that allows for solving a given task so that      
every participating process taking enough steps outputs.  

\remove{
This paper proposes a new look of what does it mean to solve
a task with a failure detector, which enables treating the set of  
participating processes as a wait-free system.
An important consequence of this definition is that it 
allows the processes to take steps on behalf of each other, e.g.,
enables simulations.
}

In this paper, we use the language of failure detectors to determine 
the relative power of such external oracles.
The oracle is represented as a set of \emph{synchronization} processes
equipped with a failure detector: each synchronization process can
query its failure detector module to get hints about the failures of
other synchronization processes. 
Thus, our system only considers failures of synchronization processes.
As in the classical failure-detector literature~\cite{CHT96}, the assumptions about when and where failures of
synchronization processes can  occur are encapsulated in
an \emph{environment}, i.e., a set of allowed failure patterns.
\emph{Computation} processes (participants in a task solution) and synchronization processes 
communicate by reading and writing in the shared memory.


Now what do we mean by solving a task with a failure detector?
We require that, under the condition that the synchronization
processes using their failure detector behave as predicted by the environment, 
every computation  process taking enough steps must output. 

%
%

It is easy to see that the classical failure-detector
model~\cite{CHT96} is a special case of our model where there is a
bijective map between computation and synchronization
processes, and  a computation process stops taking steps after 
its synchronization counterpart does.      
Strictly speaking, when it comes to solving tasks, our framework demands from 
a failure detector more than the conventional failure detector model
does.
Indeed, in our framework, the failure of a synchronization process does not affect
computation processes, and  
a failure detector is supposed to help computation processes
output, as long as they take enough steps.
In particular, we observe that the \emph{weakest} failure detector to
solve a task $T$ in our framework is at least as strong as
the weakest failure detector for $T$ in the conventional model~\cite{CHT96}.


\paragraph{Ramifications.}
The idea of separating computation from synchronization is not new,
e.g., it is used in the celebrated Paxos protocol~\cite{Lam98} 
separating \emph{proposers} from \emph{acceptors} and \emph{learners}.
But applying it to distributed computing with failure detectors 
results in a surprisingly simple model, which we call \emph{external
  failure detection (EFD)}, which resolves a number 
of long-standing puzzles.

\remove{
To warm up, consider the task of 
solving consensus among every two (computation) processes
in a system of $n>2$.
Delporte et al.~\cite{DFG10} showed that any failure detector
that allows for solving the task, 
also allows for solving consensus among \emph{all} $n$ processes. 
Does the phenomenon highlighted in~\cite{DFG10} only hold for
$1$-set agreement (consensus) or  can it be generalized to any $k\geq 1$?
Suppose that a failure detector  provides enough synchrony to solve $k$-set agreement
among any $k+1$ processes. 
A natural conjecture would be that it also allows for solving $k$-set 
agreement among all.
However, years of trying to prove the conjecture
bore no fruits. 

In EFD,  we obtain the generalization of~\cite{DFG10} to any $k\geq 1$ 
almost for free, using simple induction. 
In fact, even a stronger result holds:
if a failure detector solves $k$-set agreement among an arbitrary given set of $k+1$ processes, 
then it is strong enough to solve $k$-set agreement among all $n$ processes.
}

The use of EFD enables a complete characterization of
distributed tasks, based on the ``amount of concurrency'' they can
stand.     
\remove{
First we observe that the \emph{weakest} failure detector to solve a task in EFD is at least as strong as
the weakest failure detector in~\cite{CHT96}.
Indeed, any failure detector that solves the task in EFD, also solves
it in any its restriction, including the conventional failure detector model~\cite{CHT96}.
But since EFD allows for a simulation of computation processes, 
	the power of failure detectors to solve tasks can be completely characterized.
}
%
In the classical framework, we 
say that a task $T$ can be solved $k$-concurrently if it guarantees 
that in every $k$-concurrent run every process taking
sufficiently many steps eventually outputs~\cite{GG11}.     
Informally, a run is $k$-concurrent if at each moment of time,
there are at most $k$ participating processes without outputs.
Now, in a system of $n$ processes, each task $T$ is associated with
the largest $k$ ($1\leq k\leq n$) such that $T$ can be solved $k$-concurrently.

We show that in EFD, a failure detector $\D$ can be used to solve a task $T$ with
``concurrency level'' at most $k$ \emph{if and only if} $\D$ 
	can be used to solve $k$-set agreement.
More precisely, we show that, in every environment, 
	i.e., for all assumptions on where and when 
	failures of \emph{synchronization} processes may occur, 
	any failure detector that solves $T$ is at least as strong as  
  	the anti-$\Omega$-$k$ failure detector~\cite{Ray07,Z10}, denoted $\Nomega_k$.
Then we describe an algorithm that
	uses $\Nomega_k$ to solve $T$ (or any task that can be solved  $k$-concurrently),
	in every environment.

Thus,  any task is completely characterized through the
``level of concurrency'' its solution can tolerate. 
All tasks that can be solved $k$-concurrently
but not $(k+1)$-concurrently (e.g., $k$-set agreement) are equivalent 
in the sense that they require exactly the same amount of information
about failures (captured by $\Nomega_k$) to be solved in EFD.
Note that this characterization covers \emph{all} tasks, including ``colored'' ones 
that evaded any characterization so far~\cite{DFGT11,GK11-setcon,AN08}.

Consider, for example, the task of $(j,\ell)$-renaming in which 
$j$ processes come from a large set 
of potential participants and choose new names in a smaller name space $1,\ldots,\ell$, 
so that no two processes choose the same name.
Surprisingly, in the conventional model, the renaming task itself 
can be formulated as a failure detector, so the question of the weakest failure detector 
for solving it results in a triviality.
To avoid trivialities, additional assumptions on the scope of failure detectors are made~\cite{AN08}. 

In EFD, however, we immediately see that 
$(j,j)$-renaming (also called \emph{strong} renaming) cannot be solved
$2$-concurrently and   is thus equivalent to consensus.\footnote{Note that all tasks can be solved $1$-concurrently.}
More generally, determining the weakest failure detector for $(j,\ell)$-renaming 
boils down to determining the maximal $k$ ($1\leq k\leq j$) 
such that the task can be solved $k$-concurrently.
We show finally that $(j,j+k-1)$-renaming can be solved $k$-concurrently, and, thus,   
using $\Nomega_k$.\footnote{ 
For some values of $j$ and $k$, however, the question 
of the maximal tolerated concurrency of $(j,j+k-1)$-renaming 
is still open~\cite{CR10}.}  

Another interesting corollary of our characterization is 
that if a failure detector solves $k$-set agreement among an arbitrary given subset of $k+1$ processes, 
then it is strong enough to solve $k$-set agreement among \emph{all}
processes.
This is a generalization of the recent result of Delporte et al.~\cite{DFG10} that any failure detector
allowing for solving consensus ($1$-set agreement) among each two processes, 
also allows for solving consensus among all processes. 
Years of trying to show that the phenomenon demonstrated
in~\cite{DFG10} generalizes to all $k\geq 1$ in the conventional
failure-detector model~\cite{CHT96} bore no fruits. 

\remove{
\textbf{PK: do we need this?}\\
Our EFD framework provides a natural way to model \emph{participation}
in a distributed computation with failure detectors. A participating process is a
computation process that published its input. 
It does not have to take steps  after its output is
computed. In contrast, in the conventional failure detector model,  a process
is expected to take steps unless it fails, even if it never proposes
an input or has already obtained its output.  
Since computation processes do not have direct access to the failure detector, 
their steps can be simulated using asynchronous simulation 
techniques (\cite{BG93b,BGLR01,GK10,GG11}, etc.).
}
 
\remove{
\paragraph{Summary.}
In brief, this paper proposes a new way of thinking about failure
detectors that enables generic simulation of failure detector-based algorithms and
inherits all properties of asynchronous simulations.
The new failure detector framework, which we call external failure detection,
sees a conventional process as two independent threads, a computation process
and a synchronization process.
A computation process receives application inputs and returns outputs,
while a synchronization process queries its failure detector module to get hints
about failures of other synchronization processes.
The two classes of processes communicate via the same shared memory. 
The EFD framework enables simulation of participating computation
processes, which implies a complete characterization of distributed tasks. 
Also, it sorts out some long-standing puzzles, such as the
generalization of~\cite{DFG10} and the question of the weakest failure detector for
solving renaming.       
}

One important feature inherited by our EFD framework from  wait-free protocols is that 
it leverages \emph{simulation-based} computing: processes can
cooperate trying bring \emph{all} participating processes to their outputs.
Simulations were instrumental in establishing tight relations between 
seemingly different phenomena in \emph{asynchronous} 
systems~\cite{BG93b,BGLR01,Gafni09,GK10,DFGT11,GK10-adv,GG11},
and we extend this line of research below to failure-detector models.

\paragraph{Roadmap.}
The paper is organized as follows. 
First, we formally define our model and our new notion of task solvability with a
failure detector. 
We then present a simple inductive proof of a generalization
of~\cite{DFG10} to any $k>1$.
Then we extend the generalization even further by presenting 
a complete characterization of decision tasks, based on the level of
concurrency they can tolerate.
Then we derive the weakest failure detector for strong renaming and  
wrap up with obligatory concluding remarks.
Proofs are partially delegated to the optional Appendix.



\section{The model of external failure detection}
\label{sec:model}
%
In this section, we propose a new definition of what it means
to solve a task using a failure detector
and relate it to the conventional definition of~\cite{CHT96}. 
Parts of our model reuse elements of~\cite{CHT96,CT96,GK11-setcon,HS99}.

\subsection{Model for computation and synchronization}

Our system is split in two parts.
The \emph{computation} part is made up
of processes that get input values for the task they intend to
solve and return output values. 
The \emph{synchronization} part is made up of processes that
use failure detectors to help processes of the computation part. 

%
\vspace{2mm}\noindent\textit{Processes.}
Formally, we consider a read-write shared-memory system which consists of $\nbwf$ \emph{$\wf$-processes},
$\Pi^{\wf}=\{p_1,\ldots,$ $p_\nbwf\}$, and $n$ \emph{$\fd$-processes},
$\Pi^{\fd}=\{q_1,\ldots,q_n\}$. 
We allow $n$ and $m$ to be arbitrary natural numbers, 
but, as we shall see shortly, the only ``interesting'' case
is when $n=m$.

Intuitively, the $\wf$-processes are responsible for computation. 
The  $\fd$-processes are responsible for synchronization and may be equipped 
with a  failure detector module~\cite{CT96} that gives hints about failures of other $\fd$-processes. 
The processes in $\Pi^{\wf}]\cup\Pi^{\fd}$ communicate with each other
via reading and writing in the shared memory. 

\vspace{2mm}\noindent\textit{Failure patterns and failure detectors.}
Since $\wf$-processes are assumed to be wait-free, 
we are only interested here in failures of $\fd$-processes. 
Hence a  \textit{failure pattern} $F$ is a function from the time range $\Time=\Nat$ to 
	$2^{\Pi^{\fd}}$, where $F(\tau)$ denotes the set of $\fd$-processes 
	that have crashed by time $\tau$. 
Once a process crashes, it does not recover, i.e., $\forall \tau: F(\tau) \subseteq F(\tau+1)$.
$\faulty(F)=\cup_{\tau \in \Time} F(\tau)$ is the set of faulty processes in $F$ and
$\correct(F)=\Pi^\fd -\textit{faulty}(F)$ is the set of correct processes in $F$.

A \emph{failure detector history $H$ with range $\R$} is a function from
	$\Pi^{\fd}\times \Time$ to $\R$. 
$H(q_i,\tau)$ is interpreted as the value output by the failure detector module of $\fd$-process $q_i$ at time $\tau$.
A \textit{failure detector} $\D$ with range $\R_{\D}$ is a function that maps each failure 
pattern to a (non-empty) set of failure detector histories with range $\R_{\D}$. 
$\D(F)$ denotes the set of possible failure detector histories permitted by 
	\D~for failure pattern $F$.

An \emph{environment} $\E$ is a set of failure patterns that describes a set of conditions on when 
and where failures might occur.
For example $\E_t$ is the environment that consists 
of all failure patterns $F$ such that $\correct(F)\ge n-t$.
We assume that for every failure pattern in the environments we consider, at least one $\fd$-process is correct.

\vspace{2mm}\noindent\textit{Algorithms and runs.}
A distributed algorithm $\A$ using a failure detector $\D$ 
consists of two collections of
deterministic automata, $\A_1^{\wf},\ldots,\A_\nbwf^{\wf}$, one automaton for each $\wf$-process, and
$\A_1^{\fd},\ldots,\A_n^{\fd}$, one automaton for each $\fd$-process.
In a \emph{step} of the algorithm, a process may read or write to a
shared register, or (if it is a $\fd$-process) consult its
failure-detector module.    

A \textit{state} of $\A$ is defined as the state of each 
	process (state of each process being identified with the state of its corresponding automaton)
        and each shared object in the system. 
An \emph{initial state} $I$ of $\A$ specifies 
	an initial state for every process
	and every shared object. 

A \textit{run of  $\A$ using a failure detector ${\D}$} in an environment 
	$\E$  is a tuple $R=\langle F,H,I,\textit{Sch},T \rangle$ 
	where $F\in\E$ is a failure pattern, 
	$H\in {\D}(F)$ is a failure detector history, $I$ is an initial state of $\A$, 
	$\textit{Sch}$ is an infinite \emph{schedule}, i.e., a sequence of
        processes in $\Pi^{\wf}\cup\Pi^{\fd}$, $T$ is a sequence of
        non-decreasing elements of $\Time$.
The $k$-th step of run $R$ is a step of process $\textit{Sch}[k]$ determined by
the current state, the failure history $H$, $T[k]$ and the algorithm $\A$.
If it is a step of a 
	$\fd$-process, this process is alive
 	 ($\textit{Sch}[k] \notin F(T[k])$) and   the value of the failure detector for this step is
  	given by $H(\textit{Sch}[k],T[k])$.



Let $\infi^{\fd}(R)$ denote the set of processes 
	in  $\Pi^{\fd}$ that appear infinitely often in $\textit{Sch}$.
Respectively,  $\infi^{\wf}(R)$ denote the set of processes in
  $\Pi^{\wf}$ that appear infinitely often in $\textit{Sch}$.
We say that a run $R=\langle F,H,I,\textit{Sch},T  \rangle$ is \emph{fair} 
	if $\correct(F)$ is equal to  $\infi^{\fd}(R)$,
	and $\infi^{\wf}(R)$ is not empty.
A \emph{finite run} of $\A$ is a ``prefix'' of a run $\langle
F,H,I,\textit{Sch},T \rangle$  of $\A$, i.e., a
tuple $\langle F,H,I,\textit{Sch}',T' \rangle$ such that 
$|\textit{Sch}'|=|T'|$,  $\textit{Sch}$ is a proper prefix of
$\textit{Sch}$, and $T'$ is a proper prefix of
$T$.


\vspace{2mm}\noindent\textit{Tasks.}
%
We focus on a class of problems called \emph{tasks}
that are defined uniquely through inputs and outputs.

A \emph{task}~\cite{HS99} is defined through a set $\I$ of input vectors (one input value for each 
$\wf$-process), 
a set $\O$ of output vectors (one output value for each $\wf$-process),
and a total relation $\Delta:\I\mapsto 2^{\O}$ that associates each input vector 
with a set of possible output vectors.
An input value equal to $\bot$ denotes a \emph{not participating} process and
 $\bot$ output value denotes an \emph{undecided} process.

A  $\nbwf$-vector $L'$ is a \emph{prefix} of a $\nbwf$-vector $L$ if 
	$L'$ contains at least one non-$\bot$ item and for all $i$, 
	$1\leq  i \leq \nbwf$,  either $L'[i]=\bot$ or $L'[i]=L[i]$.
A set $\cal L$ of vectors is 
	\emph{prefix-closed} if 
	for all $L$ in $\cal L$ every prefix of $L$ is in $\cal L$.
	
We assume that each element of $\I$ and $\O$ contains at least one non-$\bot$ item
and also that  the sets $\I$ and $\O$  are prefix-closed. 
Moreover, we only consider tasks that have finite sets of input
vectors $\I$ (this assumption is used in Section~\ref{sec:main} when
we categorize tasks based on the failure detectors needed to solve them).
	
We stipulate that if $(I,O)\in\Delta$, then
	(1) if, for some $i$, $I[i]=\bot $, then $O[i]=\bot$,
	(2) for each $O'$, prefix of $O$,  $(I,O')\in\Delta$ and,
	(3) for each $I'$ such that $I$ is a prefix of $I'$, there exists some $O'$ such that $O$ is a prefix of $O'$ and $(I',O')$ in $\Delta$.
	

For example, in the task of \emph{$(U,k)$-agreement}, where $U\subseteq\Pi^{\wf}$, 
	input and output vectors are $m$-vectors, 
	such that $I[i]=\bot$ for all $p_i\notin U$,
	input values are in $\{\bot, 0,\ldots,k\}$,
	output values are in $\{\bot,0,\ldots,k\}$,
	and for each input vector $I$ and output vector $O$, $(I,O) \in\Delta$ if 
	the set  of non-$\bot$ values in $O$ is 
	a subset of values in $I$ of size at most $k$.
$(\Pi^{\wf},k)$-agreement is the conventional \emph{$k$-set agreement} 
task~\cite{Chaud93} and $(\Pi^{\wf},1)$-agreement is \emph{consensus}~\cite{FLP85}.

%
\remove{
We also restrict ourselves to task solvable by consensus.
That is, we exclude tasks that, say, can only be solved \emph{$t$-resiliently}, 
i.e., assuming that 
at most $t$ $\wf$-processes take finitely many steps.
The output of such task may depend on processes observing at least $n-t$
inputs. Such tasks as mentioned later are handled by the notion of environment.
They are not solvable even by consensus.
In this paper we deal only with the trivial environment, which can 
be specified as tasks solvable by consensus.
}

\subsection{Solving a task in the EFD framework}

Now we are ready to define what does it mean to solve a task 
in the external failure detection framework.

\vspace{2mm}\noindent\textit{Input vector and output vector of a run.} First, 
we assume that each automaton $\A^{\wf}_i$ (1)~gets
an input value $input_i$ as part of its initial state, and 
	(2)~contains $decide$ steps such that all the next steps of
	 $\A_i$  are null steps that do not affect the current state
         when they are executed and for each $decide$ step is
         associated a decision value 
         $v_i$.

The first step of each $\wf$-process is to write 
its input value to shared memory. A process that wrote its
input value is called  
\emph{participating}. 
If a $\wf$-process  executes a $decide$ step with decision value $v$, we say
that the process decides $v$
or returns $v$.

Given a run $R$,
the \emph{input vector} for the run is the  $\nbwf$-vector $I$
such that $I(i)=input_i$ if $p_i$ is a
participating process and  $I(i)=\bot$ if  $p_i$ is a not
participating process.
 In the same way, the \emph{output vector} of the run is the
 $\nbwf$-vector $O$ such that $O(i)=v$ if $p_i$ decides $v$
 in the run and $O(i)=\bot$ if $p_i$ does not decide in the run.

\vspace{2mm}\noindent\textit{Solving a task.} 
We say that a run $R$ with input vector $I$ and output vector $O$
\emph{satisfies a task $T=(\I,\O,\Delta)$} if (1) $(I,O)\in \Delta$ and (2)
$O(i)=\bot$ only if $p_i$ makes a finite number of steps ($p_i \notin \infi^{\wf}(R)$).

An
algorithm $\A$ \emph{EFD-solves a task $T=(\I,\O,\Delta)$ using a failure
	detector $\D$ in an environment $\E$} (in the rest we simply say ``solves'') 
	if every fair run of
      $\A$ satisfies $T$. 
If such an algorithm exists for task $T$, $T$ is \emph{solvable
with failure detector $\D$ in
environment $\E$}.
        By extension,  a failure detector $\D$\emph{ solves a task $T$ in $\E$} if there 
	is an algorithm $\A$ that solves $T$ using $\D$ in $\E$.

Note that we expect the algorithm to guarantee output to every 
	$\wf$-process that takes sufficiently many steps,
	regardless of where and when \emph{$\fd$-processes} fail. 
The algorithm only expects that every correct \emph{$\fd$-process} in the current failure pattern 
 	takes infinitely many steps.

%
\vspace{2mm}\noindent\textit{Comparing failure detectors.}
Failure detector reduction is defined as usual:
failure detector $\D'$ is \emph{weaker than failure detector
$\D$ in an environment $\E$} if \fd-processes can use $\D$ 
to emulate $\D'$ in $\E$.
More precisely, 
the  automata of the $\wf$-processes of the distributed \emph{reduction algorithm}  $\A$ 	
	are automata with only null steps and the
	emulation of  $\D'$ using $\D$ is made by maintaining, at each $\fd$-process $q_i$ 
	$\D'\id{-output}_i$ so that in any fair run with failure pattern $F$, 
	the evolution of variables $\{\D'\id{-output}_i\}_{q_i\in\Pi^{\fd}}$ 
	results in a history $H'\in\D'(F)$.
We say that two failure detectors are \emph{equivalent} in $\E$ if each is weaker than 
	the other in $\E$.


As in the original definiton~\cite{CHT96},
      if failure detector $\D'$ is weaker than failure detector $\D$ in
      environment $\E$, then every task solvable with $\D'$ in $\E$
	can also be solved with $\D$ in $\E$.
Now $\D$ is the \emph{weakest failure detector} to solve a task $T$ 	
	in $\E$ if (i) $\D$ solves $T$ in $\E$  and (ii) $\D$ is weaker than 
	any failure detector that solves $T$ in $\E$.
It is straightforward to extend the arguments of~\cite{JT08}
	to show that every task has a weakest failure detector.

\vspace{2mm}\noindent\textit{$k$-concurrency.}
Consider the solvability of a task without the help of a failure detector.
In this case  the deterministic automata of the $\fd$-processes of the distributed algorithm $\A$ 	
	are automata with only null steps.
Such an algorithm will be called \emph{restricted}. 

It is clear that tasks that are
solvable with a restricted algorithm are exactly tasks that are
said \emph{wait-free} solvable in the literature
(e.g.  in \cite{Her91,HS99}).

The notion of \emph{$k$-concurrent} \label{sec:kconc}
	solvability, introduced in~\cite{GG11}, is a weaker form 
	of solvability:
	a  task is solvable $k$-concurrently if it is solvable only 
        when at most $k$ $\wf$-processes  concurrently 
	invoke the task.  
More precisely,
a run of a distributed  algorithm is  \emph{$k$-concurrent} if it 
is fair and at each time there is at most $k$
undecided participating $\wf$-processes.
A task  $T=(\I,\O,\Delta)$ is  \emph{$k$-concurrently} solvable if there is
	 a restricted algorithm $\A$ such 
	 that all $k$-concurrent runs $R$ of $\A$ 
         satisfy  $T$. 
Note that runs of $\A$ in which the number of participating but not decided $\wf$-processes  
	exceeds $k$ at some point may not satisfy $T$.

A wait-free solvable task is $\nbwf$-concurrently solvable.
%
%
Also, it is easy to show that:
	

\begin{proposition}\label{prop:1conc}
Every task is $1$-concurrently solvable.
\end{proposition}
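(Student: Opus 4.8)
The plan is to exploit that a $1$-concurrent run serializes the participating $\wf$-processes and to solve $T$ with a ``big-lock'' restricted algorithm in which each participant, in turn, reads the whole memory and extends a shared committed input/output pair. The structural fact I would prove first is: in any $1$-concurrent run, whenever a $\wf$-process $p_i$ takes its first step --- writing $input_i$, which makes it participating --- all processes that were already participating have already decided, and no other process takes a non-null step between that step and $p_i$'s decision. Both claims follow at once from the bound $k=1$: just before $p_i$'s first step the set of undecided participating processes has size $\leq 1$ and excludes $p_i$, and just after it equals $\{p_i\}$, so it was empty; and a non-null step by some $p_j$ while $p_i$ is participating and undecided would either be a null step (if $p_j$ has decided) or would create a second undecided participating process. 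Hence the participants commit in a well-defined total order $p_{\ell_1},p_{\ell_2},\dots$, each acting as if it runs solo.

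The algorithm is restricted (the $\fd$-processes take only null steps) and uses a write-once array $\mathit{Inp}$ with one entry per $\wf$-process (all entries initially $\bot$) and a single register $\mathit{State}$ holding either a pair of $\Delta$ or an ``empty'' marker. Process $p_i$ executes: (i) $\mathit{Inp}[i]\gets input_i$; (ii) read $\mathit{State}$ and collect $\mathit{Inp}$ into a vector $V$; (iii) pick an output vector $O'$ with $(V,O')\in\Delta$, $O'[i]\neq\bot$, and --- if $\mathit{State}=(I^*,O^*)$ is non-empty --- with $O^*$ a prefix of $O'$; (iv) $\mathit{State}\gets(V,O')$; (v) decide $O'[i]$. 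By the structural fact, $\mathit{State}$ changes only between consecutive participants and every collected $V$ is a prefix of the run's input vector $I$, hence $V\in\I$ by prefix-closure; an induction along $p_{\ell_1},p_{\ell_2},\dots$ then shows that $\mathit{State}$ always holds a pair of $\Delta$ and that the decisions taken so far coincide with its output component. At the end, the last committed pair $(I^*_{\mathrm{last}},O^*_{\mathrm{last}})\in\Delta$ has $I^*_{\mathrm{last}}$ a prefix of $I$ and $O^*_{\mathrm{last}}$ equal to the run's output vector $O$: a process that participates after the last commit is absent from $I^*_{\mathrm{last}}$ and --- taking finitely many steps, else it would itself commit --- contributes $\bot$ to $O$. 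Task axiom~(3) gives $O'''$ with $O^*_{\mathrm{last}}$ a prefix of $O'''$ and $(I,O''')\in\Delta$, whence axiom~(2) yields $(I,O)=(I,O^*_{\mathrm{last}})\in\Delta$. Since any participant taking infinitely many steps reaches step~(v) and decides, $O[i]=\bot$ only for participants taking finitely many steps; thus every fair $1$-concurrent run satisfies $T$.

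The step I expect to be the main obstacle is (iii): showing the required $O'$ always exists. For the first committer ($\mathit{State}$ empty) it is clean --- the solo input vector of $p_i$ is a prefix of $V$, and by axiom~(1) together with the convention that every vector of $\O$ has a non-$\bot$ entry, $\Delta$ maps that solo input only to outputs assigning $p_i$ a non-$\bot$ value, which axiom~(3) carries up to $V$. For a later committer one has $I^*[i]=\bot$ (since $p_i$ becomes participating only after the previous commit), so one extends $O^*$ via axiom~(3) over the input vector that adds only coordinate $i$, completes the result on coordinate $i$ if it is still $\bot$ there, and applies axiom~(3) once more to reach $V$; this last completion is precisely where one uses that $\Delta$ is a genuine task relation, namely that a valid partial output can be completed on a participating coordinate. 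Everything else is routine bookkeeping about the sequential shape of $1$-concurrent runs.
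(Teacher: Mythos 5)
Your proposal is correct and is essentially the paper's own argument: exploit that a $1$-concurrent run serializes the participants, have each newcomer read the inputs and outputs committed so far, and extend the committed pair with its own input and a non-$\bot$ output using the task axioms, with correctness by induction along the arrival order. The point you flag as the main obstacle --- that a valid partial output can always be completed with a non-$\bot$ value on a newly participating coordinate --- is exactly the property the paper also invokes (stating it as following ``by definition of tasks''), so the two proofs rest on the same implicit assumption.
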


\vspace{2mm}\noindent\textit{Restriction on the number of $\wf$-processes.}
Trivially, if a task $T$ is solvable with a restricted algorithm then
$T$ is also
solvable with any number of \fd-processes and any failure
detector. 
Reciprocally, consider an algorithm $\A$ solving a task $T$ 
 with a trivial failure
detector\footnote{A trivial failure detector 
always outputs $\bot$.}
 in environment
$\E_{n-1}$.
If $ n\geq \nbwf$ consider the following algorithm:
 each \wf-process $p_i$ executes alternatively steps of
 $\A_{p_i}^\wf$ and steps of $\A_{q_i}^\fd$ and each $\fd$-process 
 executes only null steps. 
 It is easy to verify that in this way we
 	emulate runs of $\A$ in 
	the failure pattern in which at least all \fd-processes $q_i$ with
 $i>m$ are crashed, and such runs satisfy task $T$. Hence we get:

\begin{proposition}
If $n\geq m$,
$T$ is solvable  in $\E_{n-1}$ with a trivial
failure detector if and only if $T$ is solvable with a restricted algorithm.
\end{proposition}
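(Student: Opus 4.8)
I would establish the two implications separately; the ``if'' direction is the routine one and is in fact already noted in the discussion preceding the statement. If $T$ is solved by a restricted algorithm $\A$, then, since all $\fd$-processes of $\A$ take only null steps, the evolution of the $\wf$-processes --- and hence the input and output vectors of any run --- is completely independent of the failure detector and of the failure pattern on the $\fd$-processes. So the same $\A$, now regarded as an algorithm using the trivial failure detector, has the property that \emph{every} fair run, in any environment and in particular in $\E_{n-1}$, satisfies $T$: a fair run still has $\infi^{\wf}(R)\neq\emptyset$, so the wait-free progress of $\A$ decides every $\wf$-process taking infinitely many steps, and the safety condition $(I,O)\in\Delta$ carries over verbatim. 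Hence $T$ is solvable with the trivial failure detector in $\E_{n-1}$.

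For the ``only if'' direction, suppose $\A$ solves $T$ with the trivial failure detector in $\E_{n-1}$, and $n\geq m$. I would build a restricted algorithm $\A'$ as follows: each $\wf$-process $p_i$ ($1\le i\le m$) locally simulates the pair of automata $\A_{p_i}^{\wf}$ and $\A_{q_i}^{\fd}$, alternating their steps and answering every failure-detector query issued by $\A_{q_i}^{\fd}$ with the constant $\bot$ (the only value the trivial detector ever outputs); all $\fd$-processes of $\A'$ take only null steps. The crux is to check that each fair run $R'$ of $\A'$ corresponds to a fair run $R$ of $\A$ with the trivial detector and some failure pattern $F\in\E_{n-1}$, having the same input and output vectors. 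I would take $F$ so that $q_{m+1},\dots,q_n$ are crashed from time $0$ and, for $i\le m$, $q_i$ is crashed from just after $p_i$'s last step in $R'$ (never crashed if $p_i\in\infi^{\wf}(R')$); $F$ is monotone by construction, and every simulated step of $q_i$ in $R$ occurs while $q_i$ is alive. Then $\correct(F)=\{\,q_i : i\le m,\ p_i\in\infi^{\wf}(R')\,\}=\infi^{\fd}(R)$, and $\infi^{\wf}(R)=\infi^{\wf}(R')\neq\emptyset$, so $R$ is fair; in particular $\correct(F)\neq\emptyset$, i.e.\ $F\in\E_{n-1}$, and the trivial detector's unique, constantly-$\bot$ history lies in $\D(F)$ for this $F$ as for any pattern. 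Since $\A$ solves $T$, $R$ satisfies $T$; as $R$ and $R'$ have the same input/output vectors and the same $\infi^{\wf}$-set, $R'$ satisfies $T$ as well. Because $\A'$ is restricted, this proves the claim.

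The hypothesis $n\geq m$ enters only in the ``only if'' direction: it is precisely what lets us reserve a distinct $\fd$-process $q_i$ for each $\wf$-process $p_i$ while still being free to declare $q_{m+1},\dots,q_n$ crashed. There is no serious obstacle here; the only point requiring genuine care is the fairness/environment-membership check on $R$ --- one must confirm that a run of $\A'$ in which every $\wf$-process eventually stops is simply not fair (so it need not be simulated), that whenever some $p_i$ runs forever the matching $q_i$ is genuinely correct in $F$, and that the number of crashed $\fd$-processes never exceeds $n-1$ --- all immediate from $\infi^{\wf}(R')\subseteq\{p_1,\dots,p_m\}$, the monotonicity of $F$, and the definition of a fair run. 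Everything else is bookkeeping on schedules and timestamps.
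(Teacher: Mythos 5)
Your proof is correct and follows essentially the same route as the paper: the paper's own argument (given in the text immediately preceding the proposition) likewise handles the ``if'' direction as trivial and, for the ``only if'' direction, has each $\wf$-process $p_i$ alternate steps of $\A_{p_i}^{\wf}$ and $\A_{q_i}^{\fd}$ while emulating a failure pattern in which $q_{m+1},\dots,q_n$ are crashed. Your write-up merely fills in the fairness and environment-membership checks that the paper leaves as ``easy to verify.''
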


But if $n<\nbwf$, the $\fd$-processes may help solving the task even if
        they do not use 
	their failure detection capacities.
For example, with $n$ $\fd$-processes we can implement a
	$(\Pi^{\wf},n)$-set agreement in every environment. 
For this,
each $\fd$-process waits until at least one $\wf$-process writes its
input in shared memory, and then it writes this value to a shared variable
$V$.
Each $\wf$-process waits until $V$ has been written and outputs the
read value. As at least one \fd-process is correct, eventually $V$
will be written and  as there are $n$ $\fd$-processes at most $n$ values may be
output.
In this way the $(\Pi^{\wf},n)$-set agreement is always solvable even
without the help of any failure detector.

As we focus here on solvability where additional power of processes
is only due to the failure detection, 
the  only ``interesting" scenario to consider is when the
number of \wf-processes does not exceed the number of \fd-processes
and more specifically  the case where they are equal.
\emph{Therefore, in the following we assume that the number of \wf-processes is
equal to the number of \fd-processes, we denote this number by $\nbfinal$.}








\subsection{Conventional solvability}\label{sec:solvetask}


More conventional models of computation in which there is no separation
between the computation and the synchronization part may be considered
as a special case of the generalized model presented here.
In conventional models, each process $i\in\{1,\ldots,n\}$ can be seen as running 
two parallel threads:  $p_i$ corresponding to the computational 
part and $q_i$ corresponding to the synchronization part.
Moreover failure patterns correspond: $i$ is correct in
conventional systems if and only if
$q_i$ is correct in our setting.
But, since in our model, computation and synchronization are separate, 
it is possible that $p_i$ makes only a finite number of steps even if $q_i$
is correct or vice-versa.
Then we define \emph{personified runs} of a distributed algorithm as
being runs $R$ that are  fair and such that 
$p_i$ crashes if and only if $q_i$ crashes at the same time
(as a result, $\infi^{\wf}(R)$ is equal to  $\infi^{\fd}(R)$).
We say that algorithm $\A$ solves \emph{classically} task $T$ with failure detector $\D$ in 
	environnent $\E$
	if  every personified run $R$ of $\A$ satisfied $T$.

This definition corresponds exactly to the notion of solvability in a
conventional setting as can be found in the literature~\cite{CHT96}.

As the set of personified runs of a distributed algorithm is a  subset of the fair runs,  we have:

\begin{proposition}\label{prop:solveTOclassic}
If a failure detector $\D$ solves a task $T$ in an environment $\E$ then $\D$ classically solves 
$T$ in  $\E$.
\end{proposition}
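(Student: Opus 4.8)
The plan is to observe that the statement is an immediate consequence of the definitions, via a set-inclusion argument. Fix a failure detector $\D$, a task $T=(\I,\O,\Delta)$, and an environment $\E$, and suppose $\D$ solves $T$ in $\E$. By definition there is an algorithm $\A$ that EFD-solves $T$ using $\D$ in $\E$, i.e., every fair run of $\A$ satisfies $T$. I will show that this very same $\A$ classically solves $T$ with $\D$ in $\E$, which by the definition of classical solvability with a failure detector immediately yields that $\D$ classically solves $T$ in $\E$.

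The key point is that a personified run of $\A$ is, by definition, a fair run: it is required to be fair \emph{and} to additionally satisfy that $p_i$ crashes exactly when $q_i$ does (so that $\infi^{\wf}(R)=\infi^{\fd}(R)=\correct(F)$). A personified run is built from exactly the same ingredients as an arbitrary run — a failure pattern $F\in\E$, a history $H\in\D(F)$, an initial state, a schedule, and a sequence of times — and the extra clause only discards some of them. Hence the set of personified runs of $\A$ is a subset of the set of fair runs of $\A$. Since every fair run of $\A$ satisfies $T$, every personified run of $\A$ satisfies $T$ as well, which is precisely the statement that $\A$ classically solves $T$ with $\D$ in $\E$.

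I do not expect any obstacle here: the notion ``$R$ satisfies $T$'' — namely $(I,O)\in\Delta$ together with the requirement that every $\wf$-process in $\infi^{\wf}(R)$ decides — is literally the same in both frameworks, so no translation of the safety/liveness condition is needed. The only mildly delicate thing to state carefully is the definition of personified run and the observation that it is a specialization (not a modification) of the run notion; once that is spelled out, the inclusion of run sets, and hence the conclusion, is routine. The substantive direction, which is treated later in the paper, is the converse: understanding when classical (or $k$-concurrent) solvability can be ``promoted'' to EFD-solvability with an appropriate failure detector. Proposition~\ref{prop:solveTOclassic} merely records that EFD-solvability is the stronger of the two notions.
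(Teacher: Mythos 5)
Your argument is exactly the paper's: the set of personified runs of $\A$ is a subset of its fair runs (a personified run is by definition fair plus an extra crash-coupling condition), so every personified run satisfies $T$, which is classical solvability. This matches the one-line justification the paper gives immediately before the proposition.
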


\begin{corollary}\label{col:WFDsolveTOclassic}
If $\D$ is the weakest failure detector to classically solve a task $T$ in an environment $\E$, then 
	$\D$ is weaker than the weakest  failure detector to solve $T$ in $\E$. 
\end{corollary}	
%
%
Note that the converse of Proposition~\ref{prop:solveTOclassic} is not
true. For example, consider  
the $(\{p_1,p_2\},1 )$-agreement task (consensus among $p_1$ and $p_2$).
It is classically solvable in $\E_2$ (assuming at most $2$ failures) 
with the failure detector $\D$  that, for each \fd-process, outputs
 $q_1$ if $q_1$ is correct and outputs $q_2$ if $q_1$ is faulty.
But this task is not solvable in $\E_2$ with this failure detector
(intuitively, otherwise, if $q_1$ is crashed we would be able to solve consensus between
$p_1$ and $p_2$ without a failure detector).

However for \emph{colorless} tasks\footnote{Informally, in a solution of 
a colorless task~\cite{BGLR01}, 
a process is free to adopt the input
or
the output value of any other participating process.}  both notions of solvability coincide.

\begin{proposition}\label{prop:colorless}
Let $T$ be a colorless task, $T$ is  solvable with   failure detector $\D$ in environment $\E$ if and only if  $T$ is classically solvable
	with  $\D$ in  $\E$.
The weakest failure detector to solve $T$ in $\E$ is the weakest failure detector to classically solve $T$ in $\E$.	
\end{proposition}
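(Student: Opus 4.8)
The plan is to combine the ``only if'' direction, which is already Proposition~\ref{prop:solveTOclassic} and holds for \emph{every} task, with a simulation argument for the ``if'' direction that exploits colorlessness. Given an algorithm $\A$ that classically solves the colorless task $T$ with $\D$ in $\E$, I would build an EFD algorithm $\A'$ in which the $\fd$-processes run $\A$ on behalf of everybody and the $\wf$-processes merely harvest decisions. Concretely: each $\wf$-process $p_i$ first writes its input into a dedicated shared region and then spins on a shared ``decision pool'' until it holds a non-$\bot$ value, which it returns; each $\fd$-process $q_j$ spins until it observes some published $\wf$-input, adopts one such value $u_j$, and then simulates --- step by step, in a private copy of $\A$'s shared memory common to all $\fd$-processes --- the conventional process $j$ of $\A$ run with input $u_j$: it plays $u_j$ whenever the simulated $p_j$ would write its input, it queries the \emph{real} $\D$ (whose range is indexed by $\Pi^{\fd}$, so $q_j$ obtains exactly the value conventional process $j$ would observe) whenever the simulated $q_j$ would consult its module, and as soon as simulated process $j$ decides it copies that value into the decision pool.

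The core of the argument is then to check that every fair run $R$ of $\A'$ satisfies $T$. Since $\infi^{\wf}(R)\neq\emptyset$ at least one $\wf$-process publishes, so every correct $\fd$-process eventually adopts a value and starts simulating; the run of $\A$ that these simulations collectively generate uses the failure pattern $F$ and history $H$ of $R$, has conventional process $j$ take infinitely many steps iff $q_j$ does (hence, by fairness of $R$, iff $q_j\in\correct(F)$), and has both threads of each $j$ stop together --- so it is a \emph{personified} run of $\A$ in $\E$. Because $\A$ classically solves $T$, its output vector $O_{\mathrm{sim}}$ satisfies $(I_{\mathrm{sim}},O_{\mathrm{sim}})\in\Delta$ and every conventional process taking infinitely many steps decides; by the standing assumption at least one $q_j$ is correct, so the decision pool becomes non-empty and every $p_i\in\infi^{\wf}(R)$ eventually reads it and decides, which gives clause~(2) of ``$R$ satisfies $T$''. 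For clause~(1): every non-$\bot$ entry of $I_{\mathrm{sim}}$ is some $u_j$, i.e.\ a value actually input by a participating $\wf$-process (and by condition~(1) on $\Delta$ a deciding simulated $p_j$ has in fact published), while every non-$\bot$ value returned in $R$ is copied from $O_{\mathrm{sim}}$; thus the set of input values appearing in $I_{\mathrm{sim}}$ is contained in that of the real input vector $I$ of $R$, and the set of output values returned in $R$ is contained in that of $O_{\mathrm{sim}}$. Colorlessness of $T$ --- a process may adopt the input or output value of any participating process --- together with the closure conditions (2)--(3) on $\Delta$ then turns $(I_{\mathrm{sim}},O_{\mathrm{sim}})\in\Delta$ into $(I,O)\in\Delta$. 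Hence $\A'$ solves $T$ with $\D$ in $\E$.

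For the statement about weakest failure detectors, both the weakest failure detector $\D_{\mathrm{cl}}$ to classically solve $T$ and the weakest $\D_{\mathrm{efd}}$ to solve $T$ in the EFD sense exist (by the extension of~\cite{JT08} noted earlier). Corollary~\ref{col:WFDsolveTOclassic} gives that $\D_{\mathrm{cl}}$ is weaker than $\D_{\mathrm{efd}}$; conversely $\D_{\mathrm{cl}}$ classically solves the colorless task $T$, so by the equivalence just established it also solves $T$ in the EFD sense, whence $\D_{\mathrm{efd}}$ is weaker than $\D_{\mathrm{cl}}$, and the two are equivalent in $\E$. I expect the delicate part to be the bookkeeping that the $\fd$-processes' collective activity is a \emph{legal} personified run of $\A$ (correct induced failure pattern over conventional processes, correct $\D$-history, matching crash times, fairness), together with the careful use of colorlessness --- it is precisely colorlessness, plus the monotonicity encoded in (2)--(3), that licenses replacing the simulated input/output multisets by the real ones without leaving $\Delta$. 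A secondary point to get right is that $\fd$-processes are permitted to wait whereas $\wf$-processes are not, which is why every wait on $\D$ or on the progress of $\A$ is pushed onto the $\fd$-side and the $\wf$-processes only ever spin on a pool that some correct $\fd$-process is guaranteed to fill.
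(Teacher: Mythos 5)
Your argument is correct, and it is worth noting that the paper itself never writes out a proof of Proposition~\ref{prop:colorless} (it is one of the proofs omitted from the appendix), so there is nothing to compare against line by line. Your construction is exactly the one the paper implicitly relies on: the ``only if'' direction is Proposition~\ref{prop:solveTOclassic}, and the ``if'' direction pushes the entire conventional algorithm onto the $\fd$-side, with each $q_j$ adopting a published input, driving both threads of conventional process $j$ (so the induced run is personified with failure pattern $F$ and history $H$), and depositing decisions for the $\wf$-processes to harvest. This is the same use of colorlessness as in the proof sketch of Theorem~\ref{th:ind}, where simulators propose their own input on behalf of every simulated process and adopt any decided value. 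The one point you gloss slightly is that colorlessness must be invoked twice on the input side: not only to pass from $(I_{\mathrm{sim}},O_{\mathrm{sim}})\in\Delta$ to $(I,O)\in\Delta$, but already to guarantee that $I_{\mathrm{sim}}$ --- which may be non-$\bot$ at positions where the real $p_j$ does not participate, and may repeat values --- is a legal element of $\I$ at all; for colorless tasks (whose specification depends only on the sets of input and output values and is closed under adoption) this holds, but it is part of what the hypothesis buys you rather than a consequence of closure conditions (2)--(3) alone. The weakest-failure-detector claim then follows exactly as you say, from Corollary~\ref{col:WFDsolveTOclassic} and the equivalence just established.
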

	

\vspace{2mm}\noindent\textit{Failure detectors for $k$-set agreement.}

The failure detector $\Nomega_k$~\cite{Z10} outputs, at every \fd-process and each time, 
a set of $(n-k)$ \fd-processes. 
$\Nomega_k$ guarantees that there is a time after which  
some correct \fd-process is never output:
\[
\begin{array}{l}
\forall F \in \E,\; \forall H\in \Nomega_k(F),\; \exists q_i\in\correct(F),\; \tau\in\Time,\\ 
\hspace{4mm}	\forall \tau'>\tau, \forall q_j\in\correct(F): q_i\notin H(q_j,\tau'). 
\end{array}
\]

$\Nomega_{1}$ is equivalent to $\Omega$~\cite{CHT96} that outputs a $\fd$-process  
such that eventually the same correct $\fd$-process is permanently output at all correct processes.

From \cite{GK11-setcon}, we know that in every environment $\E$, 
$\Nomega_k$ is the weakest failure detector 
to classically solve $(\Pi^{\wf},k)$-set agreement in $\E$.
As  $(\Pi^{\wf},k)$-set agreement is a colorless task, from Proposition~\ref{prop:colorless} we obtain:

\begin{proposition}\label{prop:antiOmega}
In every environment $\E$, $\Nomega_k$ is the weakest failure detector 
to solve $(\Pi^{\wf},k)$-set agreement in $\E$. 
\end{proposition}
\section{Solving a puzzle}
\label{sec:kset}

Let $U$ be a set of $k+1$ \wf-processes.   
Consider a failure detector $\D$ that \emph{solves} 
$k$-set agreement among the processes in $U$.
We show that $\D$ actually solves $k$-set agreement among \emph{all} $\nbfinal$
\wf-processes. 

\begin{theorem}
\label{th:ind}
Let $U$ be a set of $(k+1)$  \wf-processes, for some $1\le k<\nbfinal$.   
For every environment  $\E$, if a failure detector $\D$ solves $(U,k)$-set agreement 
in~$\E$ then $\D$ solves $(\Pi^{\wf},k)$-set agreement in~$\E$.  
\end{theorem}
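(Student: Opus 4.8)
The plan is to reduce $(\Pi^{\wf},k)$-set agreement to $(U,k)$-set agreement, exploiting the feature that distinguishes the EFD model from the conventional one: computation processes are wait-free and may take steps on behalf of one another, so they can run asynchronous simulations. Let $\A$ solve $(U,k)$-set agreement using $\D$ in $\E$; I would construct an algorithm $\B$ solving $(\Pi^{\wf},k)$-set agreement using $\D$ in $\E$ as follows. The $\fd$-processes of $\B$ run the $\fd$-automata of $\A$ verbatim, so that the simulated run of $\A$ experiences exactly the given environment $\E$ and failure detector $\D$. The $\nbfinal$ $\wf$-processes of $\B$ cooperatively simulate, following the BG simulation~\cite{BG93b,BGLR01}, a run of the $\wf$-part of $\A$ in which only the $k+1$ processes of $U$ participate: each $\wf$-process of $\B$ is initially in charge of one simulated $U$-process, proposes its own input to a safe-agreement instance that fixes that $U$-process's input, then advances all $k+1$ simulated $U$-processes round-robin, agreeing on each simulated step through further safe-agreement instances; as soon as it sees some simulated $U$-process decide a value $v$, it outputs $v$. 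Equivalently, one may organize this as an induction on $\nbfinal-k-1$, the base case $U=\Pi^{\wf}$ being trivial and the inductive step extending a solution for a set $V\supseteq U$ to a solution for $V\cup\{q\}$; the one-shot version avoids nested simulations.

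Granting that the simulation is correct, the task requirements for $\B$ reduce to those of $\A$. Because the $\fd$-processes of $\B$ run $\A$'s code in $\E$ under $\D$, and because at least one $\wf$-process participates in any fair run of $\B$ (hence at least one simulated $U$-process is invoked and driven infinitely often), the simulated run of $\A$ is fair and therefore satisfies $(U,k)$-set agreement. Validity then holds: every value output by $\B$ is a decision of a simulated $U$-process, hence one of the simulated inputs, which are inputs of participating $\wf$-processes of $\B$. Agreement holds: the fair run of $\A$ decides at most $k$ distinct values, so $\B$ outputs at most $k$ distinct values. And $\A$ guarantees that every simulated $U$-process that participates and is driven infinitely often eventually decides, so, as long as the simulation lets every $\wf$-process taking sufficiently many steps witness some such decision, every participating $\wf$-process of $\B$ that takes enough steps outputs.

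I expect this last point --- termination, i.e., making the cooperative simulation compatible with wait-freedom so that no $\wf$-process waits forever on another --- to be the crux. Two things must be handled. First, the corner case in which a single $\wf$-process $p_i$ participates: routing $p_i$'s input into the $U$-process it carries forces that $U$-process to participate in the simulated run of $\A$, hence (by $\A$'s correctness) to decide some value obtained from $p_i$'s input, so $p_i$ outputs; a passive strategy in which $p_i$ merely waits for the $U$-processes of $\A$ to decide would deadlock here. Second, the standard BG obstacle that a safe-agreement instance stalls when the process inside its critical section stops: this must be controlled using the fact that $(U,k)$- and $(\Pi^{\wf},k)$-set agreement are \emph{colorless} tasks, so a $\wf$-process only needs \emph{some} simulated $U$-process to decide rather than its own, together with the standard BG-style liveness argument adapted to the present setting (a permanently stalled simulated $U$-process can be ``charged'' to a $\wf$-process that has itself stopped and is therefore owed no output). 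Via Proposition~\ref{prop:colorless} the whole argument could even be carried out in the classical model --- which is presumably why the conventional-model generalization of~\cite{DFG10} resisted proof for so long --- but it is the EFD formulation that makes the simulation natural.
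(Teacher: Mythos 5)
There is a genuine gap, and it sits exactly where you predicted: termination. Your plan is to have the $\nbfinal$ \wf-processes BG-simulate the $k+1$ codes of $U$ directly, resolving each simulated step by safe agreement, and to discharge the usual BG blocking obstacle by ``charging'' each permanently blocked simulated code to a stalled simulator. That accounting works in the standard BG setting because there are \emph{more} simulated codes than simulators that may stall, so at least one code always survives. Here the ratio is inverted: there are $\nbfinal$ simulators but only $k+1$ simulated codes, and wait-freedom requires a single live \wf-process to output even if the other $\nbfinal-1$ stop. Whenever $k+1\le \nbfinal-1$ (i.e., in every non-trivial instance of the theorem), $k+1$ stalled simulators can each be caught inside the critical section of a distinct safe-agreement instance, blocking \emph{all} $k+1$ simulated $U$-processes forever; the remaining live simulator then never witnesses a decision and never outputs. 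Colorlessness does not help --- it lets a simulator adopt \emph{any} simulated decision, but there is none to adopt. Nor does running $\A^{\fd}$ verbatim on the \fd-processes help, since the obstruction is entirely among the \wf-simulators.

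The paper's proof avoids this by not simulating $U$ directly. It runs a downward induction on $x$ from $\nbfinal$ to $k$: assuming $(\Pi^{\wf},x)$-set agreement is already solvable with $\D$, it invokes Proposition~\ref{prop:antiOmega} to extract $\Nomega_x$ from $\D$, and then uses the $\Nomega_x$-driven simulation of Appendix~\ref{app:kcodes}, in which each simulated step is decided by a \emph{leader-based consensus} (with leaders supplied by $\Vomega_x$ and by the participating-set rule) rather than by safe agreement. That construction guarantees that at least one simulated code takes infinitely many steps \emph{no matter how many simulators stall}, which is precisely the property your safe-agreement-based simulation lacks. The algorithm being simulated at stage $x$ is also not $\A$ on $U$ alone but $\A_x$: $\A$ on $U$ combined with $p_{k+2},\ldots,p_x$ returning their own inputs, which yields at most $x-1$ output values and drives the induction down to $k$. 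So your proposal needs either this inductive bootstrapping (or some other source of $\Nomega_k$-strength leader election) to replace safe agreement; as written, the liveness argument fails.
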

\begin{proofsketch}
Without loss of generality, assume that $U=\{p_1,\ldots,p_{k+1}\}$.
%
Let $\A$ be a distributed algorithm that solves the 
$(U,k)$-set agreement  in $\E$ with $\D$.

Let $U_x$ denote $\{p_1,\ldots,p_x\}$, $x=k+1,\ldots,\nbfinal$.
We observe first that $\D$ can be used to solve $(U_x,x-1)$-set
agreement as follows.
$\wf$-processes in $\{p_1,\ldots,p_{k+1}\}$ and $\fd$-processes  $\{q_1,\ldots,q_n\}$ run $\A$ to solve $k$-set agreement and 
return the value  returned by the algorithm,
and processes in $\{p_{k+2},\ldots,p_x\}$ simply return their own input values.
In total, at most $x-1$ distinct input values are returned.
Let $\A_x$ denote the resulting algorithm. 

We proceed now by downward induction to show that 
for all $x=\nbfinal$ down to $k$, $\D$ solves $(\Pi^{\wf},x)$-set agreement. 

The base case is immediate:  $\{p_1,\ldots,p_{\nbfinal}\}$  trivially
solve $(\Pi^{\wf},\nbfinal)$-set agreement without any 
failure detector. 
Now suppose that $\D$ solves $(\Pi^{\wf},x)$-set agreement for $x\ge k+1$.
By Proposition~\ref{prop:antiOmega}, $\D$ can be used 
to implement $\Nomega_{x}$.
 
Using the generic simulation technique presented in
Appendix~\ref{app:kcodes}, 
the \wf-processes, $p_1,\ldots,p_{\nbfinal}$, can use $\Nomega_{x}$ 
to simulate a run of the \wf-part of $\A_{x}$ on $p_1,\ldots,p_{x}$,
so that at least one simulated process takes infinitely many steps.\footnote{
We could have used a ``black-box'' simulation of  $\A_x$
using $(\Pi^{\wf},x)$-set agreement objects presented in~\cite{GG11}. 
To make the paper self-contained, we give a direct construction 
using $\Nomega_x$ in Appendix~\ref{app:kcodes}.}
The $\fd$-part of $\A_{x}$ is executed by $\fd$-processes.
In the simulation, each simulating process proposes
its input value as an input value in the first step for each 
simulated process in $\{p_1,\ldots,p_{x}\}$ (this can be done, since
$(\Pi^C,x)$-set agreement is a colorless task).

Suppose that the current run is fair, i.e., every correct 
\fd-process takes infinitely many steps.
Therefore, we simulate a fair run of $\A_x$ and  
thus eventually some simulated \wf-process in $\{p_1,\ldots,p_{x}\}$  
decides on one of the input values of the \wf-processes. 
Once a simulator finds out that a simulated process decided, it returns the decided value.
Thus, eventually, every correct simulator returns.
Since all decided values come from a run of $\A_x$, 
at most $x-1$ distinct input values can be decided. 
Hence, $\D$ solves  $(\Pi^{\wf},x-1)$-set agreement. 
\end{proofsketch}
Therefore, in our framework, we obtain a direct generalization of the fact that 
for a failure detector, it is as hard to solve consensus in a system of $\nbfinal$ processes
as to solve consensus among each pair of processes~\cite{DFG10}.
In fact, the separation between \wf-processes and \fd-processes,
implies a stronger result: 
solving $k$-set agreement among one given set of $(k+1)$ processes is as hard 
(in the failure detector sense) as solving it among all $\nbfinal$ processes.

\section{Generalizing the puzzle}
\label{sec:main}

We showed in the previous section that solving $k$-set agreement  among any given set 
of $k+1$ $\wf$-processes requires an amount of information 
about failures that is sufficient to solve $k$-set agreement among all $\nbfinal$ \wf-processes.
We show below that this statement can be extended to any task $T$ that cannot be solved 
$(k+1)$-concurrently.
%
We present an explicit reduction algorithm 
	that extracts $\Nomega_k$ from any failure detector that solves $T$.
Conversely,  we show that a task that is  $k$-concurrently solvable 
	can be solved with $\Nomega_k$ in any environment.


Finally, we derive a complete characterization of generic tasks : 
all tasks that can be solved $k$-concurrently but not 
$(k+1)$-concurrently are equivalent in the sense that they
require the same information about failures to be solved ($\Nomega_k$).   
  
\subsection{Reduction to $\Nomega_k$}
\label{subsec:reduc}
Let $T$ be any task that cannot be solved  $(k+1)$-concurrently. 
Let $\E$ by any environment.
We show that every failure detector $\D$ that solves $T$ in $\E$ can be
used to implement $\Nomega_k$ in $\E$ as follows. 

Let $\A$ be the algorithm that solves $T$ using $\D$ in $\E$.
Recall that $\A$ consists of two parts: $\A^{\wf}$ is run by the
	\wf-processes $p_1,\ldots,p_\nbfinal$ and  $\A^{\fd}$  is run by the
	\fd-processes $q_1,\ldots,q_n$.

First, we construct a restricted  algorithm $\A_{\id{sim}}$.
In $\A_{\id{sim}}$, \wf-processes $p_1,\ldots,p_\nbfinal$ 
perform two parallel tasks. 
In the first task, $\wf$-processes take steps on behalf of $\A^\wf$.
In the second task, they simulate a run of $\A^\fd$ on \fd-processes 
 using, instead of $\D$, 
	a \emph{directed acyclic graph} (DAG) $G$.
The DAG $G$ contains a sample of values output by $\D$ 
	in some run $R$ of $\A$~\cite{CHT96, Z10}.  
In $\A_{\id{sim}}$, \fd-processes take null steps.	

Informally, each run of $\A_{\id{sim}}$ gives ``turns'' to the $\fd$-processes 
	and if $G$ provides enough information about failures to simulate the next step 
	of a $\fd$-process $q_j$, the step of $q_j$ appears in the simulated run of $\A$.    
To simulate steps of $\A^{\fd}$, 
\wf-processes employ BG-simulation~\cite{BG93b,BGLR01}.
This simulation technique  enables  
       $k+1$ processes  called \emph{simulators}, to
	 simulate a run  of any asynchronous $n$-processes 
	protocol in which at least $(n-k)$ processes take infinitely many steps.	
%
%
Thus, if $k$ or less participating $\wf$-processes take a finite number of steps, 
	the resulting run of 
	$\A_{\id{sim}}$ gives infinitely many turns to at least $n-k$ $\fd$-processes.

Let $F$ be the failure pattern of the run in which $G$ was constructed.
$\A_{\id{sim}}$ guarantees that  
(1) every finite run of $\A_{\id{sim}}$ simulates a finite run of $\A$, and 
(2) if every \fd-process that is correct in $F$ 
receives infinitely many turns to take steps, then 
the simulated run of $\A$ is fair, and
(3) if $k$ or less participating $\wf$-processes take only finitely many 
	number of steps, then there are at most $k$ \fd-processes that  
	receive only finitely many turns to take steps in the
        simulation.

Second we construct a reduction algorithm. In such an algorithm \wf-processes take null steps.
Our reduction algorithm consists of two components (both are run exclusively 
by the \fd-processes). 
In the first component, every \fd-process $q_i$ queries $\D$, exchanges the 
	returned values with other 
	\fd-processes and maintains a DAG $G_i$.
In the second component, each $q_i$ locally simulates multiple $(k+1)$-concurrent runs of 
	$\A_{\id{sim}}$ 
	using $G_i$, going over all combinations of inputs, exploring the runs in 
	the depth-first manner. 
The simulation continues as long as some simulated \wf-process does not decide
	in the produced run of $\A_{\id{sim}}$.
Since $T$ cannot be solved $(k+1)$-concurrently, 
	there must be a $(k+1)$-concurrent run of $\A_{\id{sim}}$ 
	in which some participating  \wf-process that takes infinitely many steps never decides. 
The only reason for a \wf-process not to decide in a run of $\A_{\id{sim}}$ 
	is that some correct  \fd-process receives only finitely many turns in
	 the 	simulation.
But in the simulation, at least $(n-k)$ \fd-processes receive infinitely many turns.
Thus, by outputting the identities of the $(n-k)$ \fd-processes that were last to receive turns 
	in the current run we emulate the output of $\Nomega_k$: we 
	output sets of $n-k$ \fd-processes 
	that eventually never contain some 
	correct process.


\begin{theorem}
\label{th:nec}
Let $T$ be a task that cannot be solved $(k+1)$-concurrently. 
For every environment $\E$,  for every failure detector $\D$ that solves
$T$ in~$\E$,  $\Nomega_k$ is weaker than $\D$ in~$\E$.
\end{theorem}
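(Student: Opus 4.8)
\begin{proofsketch}
The plan is to convert the algorithm $\A$ that solves $T$ with $\D$ in $\E$ into a reduction in which the \fd-processes, using $\D$, build a sampled trace of $\D$ and run many simulations of a restricted (failure-detector-free) algorithm $\A_{\id{sim}}$ extracted from $\A$; the emulated output of $\Nomega_k$ is then read off from which \fd-processes are perpetually starved in those simulations. This follows the weakest-failure-detector ``extraction'' template of~\cite{CHT96,Z10}, but with an inner BG-simulation.

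\textbf{Step 1 (the simulated algorithm).} First I would fix a directed acyclic graph $G$ that samples the values output by $\D$ along a run with some failure pattern $F\in\E$, and define a restricted algorithm $\A_{\id{sim}}=\A_{\id{sim}}[G]$ in which each \wf-process concurrently (i) takes steps of $\A^{\wf}$ and (ii) participates in a BG-simulation~\cite{BG93b,BGLR01} of a run of $\A^{\fd}$ on the \fd-processes, every simulated query of $\D$ being answered from a path of $G$; in $\A_{\id{sim}}$ the \fd-processes take only null steps. I would then record three invariants: (a) every finite run of $\A_{\id{sim}}$ projects onto a finite run of $\A$, so the safety of $\A$ (consistency with $\Delta$) is inherited; (b) if every \fd-process correct in $F$ receives infinitely many turns then the projected run of $\A$ is fair, hence --- since $\A$ solves $T$ --- every \wf-process taking infinitely many steps decides; and (c) in every $(k+1)$-concurrent run of $\A_{\id{sim}}$ at least $n-k$ \fd-processes receive infinitely many turns (equivalently, at most $k$ are starved). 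Invariant (c) is where the \cite{GG11} $k$-concurrency simulation and BG bookkeeping have to be done carefully.

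\textbf{Step 2 (the reduction and the extraction).} The reduction is run by the \fd-processes alone (the \wf-processes take null steps): each $q_i$ repeatedly queries $\D$ and gossips, maintaining a growing DAG $G_i$, and concurrently performs a deterministic depth-first enumeration of the $(k+1)$-concurrent runs of $\A_{\id{sim}}[G_i]$ over all input assignments, extending the current run whenever some participating simulated \wf-process still has no output. Standard arguments give that the $G_i$ of the correct \fd-processes converge to a common limit DAG $G^{\star}$ sampled from the actual (hence $\E$-admissible) execution, so eventually all correct \fd-processes drive the same enumeration. Because $T$ is \emph{not} $(k+1)$-concurrently solvable, the restricted algorithm $\A_{\id{sim}}[G^{\star}]$ has a $(k+1)$-concurrent run that does not satisfy $T$; by invariant (a) this must be a liveness violation, so the enumeration converges to one infinite run $\rho$ in which some participating \wf-process takes infinitely many steps without deciding. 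By the contrapositive of invariant (b), some \fd-process $q^{\star}$ that is correct in $F$ receives only finitely many turns in $\rho$, while by invariant (c) at least $n-k$ \fd-processes receive infinitely many turns.

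\textbf{Step 3 (reading off $\Nomega_k$).} At each step $q_i$ outputs the set of the $n-k$ \fd-processes whose most recent turn in its current simulated run is latest (ties broken arbitrarily). Along $\rho$ every \fd-process that gets only finitely many turns has a fixed last-turn index, eventually overtaken by each of the $\ge n-k$ perpetually active ones; hence from some point on the emulated output never contains a finitely-active \fd-process, in particular never contains $q^{\star}\in\correct(F)$. Since all correct \fd-processes eventually agree on $\rho$, the emulated histories lie in $\Nomega_k(F)$, so $\Nomega_k$ is weaker than $\D$ in $\E$. The main obstacle is Step~2: making the two nested simulations precise (a BG-simulation of $\A^{\fd}$ inside $\A_{\id{sim}}$, itself enumerated by the \fd-processes over the converging DAGs) and showing that a non-solvability witness for $\A_{\id{sim}}[G^{\star}]$ is genuinely reached in the limit of the depth-first search driven by the finite approximations $G_i$ --- the familiar, but here doubly layered, CHT extraction argument.
\end{proofsketch}
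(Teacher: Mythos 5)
Your proposal follows essentially the same route as the paper's proof: the same restricted algorithm $\A_{\id{sim}}$ built from a CHT-style DAG with an inner BG-simulation of $\A^{\fd}$, the same three invariants, the same deterministic depth-first enumeration of $(k+1)$-concurrent runs by the \fd-processes converging on a never-deciding witness, and the same extraction of $\Nomega_k$ from the $n-k$ most recently scheduled \fd-processes. The paper's appendix merely makes the convergence step concrete via its corridor-based exploration and the ``adopt'' mechanism, which is exactly the obstacle you correctly flag at the end of Step~2.
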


\subsection{Solving a $k$-concurrent task with $\Nomega_k$}\label{subsec:ome}
In this section, instead of  $\Nomega_k$, we use an equivalent failure detector $\Vomega_k$~\cite{Z10}.
Basically, $\Vomega_k$ gives a $k$-vector of processes such that, 
eventually, at least one position of the vector stabilizes on the same correct process
at all correct processes.

By definition if $T$ is $k$-concurrently solvable, then there exists a restricted 
algorithm $\A$ that $k$-concurrently solves~$T$.

First, we define an abstract simulation technique that, with
help of $\Vomega_k$, allows us to simulate, in a system of $\nbfinal$ \wf-processes, runs of any
restricted input-less algorithm on  $k$ \wf-processes (the set of non-$\bot$ input values is a singleton).
Moreover, in this simulation, if $\ell$ simulators participate then at most $min(k,\ell)$ processes take
infinitely many steps in the simulated execution.
Basically, to perform a step for a simulated $\wf$-process $p_i$,
 the \wf-processes and the \fd-processes execute
an instance of a leader-based consensus algorithm~\cite{CT96}, 
	using the item $i$ of  $\Vomega_k$ as a leader.
The property of $\Vomega_k$  ensures that for some $i$, infinitely 
	many consensus instances terminate.

 
Second, we define a restricted algorithm $\B$ 
for $k$ \wf-processes that simulates
a $k$-concurrent run of $\A$, using the BG-simulation techniques~\cite{BG93a,BGLR01} .
Applying the abstract simulation technique to $\B$, 
we obtain an algorithm in which every run $R$ 
simulates 
a run $R_{sim}$ of $\A$ such that:
(1) $R_{sim}$ contains only steps of participating processes of $R$, 
(2) the inputs of the participating processes are the same in $R$ and $R_{sim}$,
(2) $R_{sim}$ is $k$-concurrent, and 
(3) every  \wf-process  that takes infinitely many steps in $R$ takes also infinitely many steps in  $R_{sim}$.
So if $T$ is $k$-concurrent solvable with $\A$, $R_{sim}$ satisfies $T$, 
and, consequently, $R$ satisfies $T$.


To sum up,  we have constructed an algorithm that solves $T$ with  
$\Nomega_k$: with the help of $\fd$-processes and  $\Nomega_k$,
$p_1,..,p_\nbfinal$  simulate $\wf$-processes  $p'_1,..,p'_k$  
that, in turn, simulate 
 $\wf$-processes $p''_1,..,p''_\nbfinal$ taking steps in a
 $k$-concurrent execution of algorithm $\A$.

\begin{theorem}
\label{th:omega}
Let $T$ be any  $k$-concurrently solvable task.
For every environment~$\E$,  $\Nomega_k$ solves $T$ in~$\E$.
\end{theorem}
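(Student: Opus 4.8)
The plan is to build the solving algorithm as a two-layer simulation, exactly as sketched in the ``To sum up'' paragraph, and to verify that the composition preserves all the properties needed for a fair run to satisfy $T$. Since $T$ is $k$-concurrently solvable, fix a restricted algorithm $\A$ all of whose $k$-concurrent runs satisfy $T$. First I would construct the inner restricted algorithm $\B$ for $k$ $\wf$-processes $p'_1,\ldots,p'_k$ that runs BG-simulation~\cite{BG93a,BGLR01} of $\A$: the $k$ simulators jointly simulate the $\nbfinal$ $\wf$-processes $p''_1,\ldots,p''_\nbfinal$ of $\A$, using the standard safe-agreement boxes to agree on each simulated step, so that the simulated run is $k$-concurrent (at most $k$ simulated processes are ``active'' at a time) and every simulator that takes infinitely many steps drives at least one simulated process forward. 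Because $(\Pi^C,k)$-set-agreement-style colorless handling is not needed for $\B$ itself (it is input-less at the top, inputs are injected at the BG layer), the only subtlety here is standard: ensuring that the $k$-concurrency invariant of the simulated run of $\A$ is maintained by the BG boxes, which is exactly what~\cite{GG11} establishes.

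Next I would define the abstract simulation technique that lets the $\nbfinal$ real $\wf$-processes, together with the $\fd$-processes and $\Vomega_k$, simulate a run of the input-less restricted $k$-process algorithm $\B$. To simulate a step of a simulated process $p'_i$ ($1\le i\le k$), the real $\wf$-processes and $\fd$-processes run an instance of the leader-based consensus algorithm of~\cite{CT96}, using the $i$-th component of $\Vomega_k$ as the leader oracle for that instance. The key property of $\Vomega_k$ — that eventually some component $i$ stabilizes on a single correct $\fd$-process — guarantees that the consensus instances associated with that index $i$ eventually all terminate, hence simulated process $p'_i$ takes infinitely many steps; and since at most $k$ indices exist, at most $\min(k,\ell)$ simulated processes take infinitely many steps when $\ell$ simulators participate. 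I would then argue that a correct (infinitely-often-scheduled) real $\wf$-process eventually gets its turn in the round-robin over simulated steps and so makes progress, and that a real $\wf$-process that decides does so by reading off the output of its simulated $p'_i$ in the run of $\B$.

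Composing the two layers, every fair run $R$ of the resulting algorithm induces a simulated run $R_{sim}$ of $\A$ satisfying: (i) $R_{sim}$ mentions only steps ``owned'' by participating $\wf$-processes of $R$, with matching inputs (inputs are propagated down through the BG layer, using that the task is handled colorlessly at the input level as in the proof of Theorem~\ref{th:ind}); (ii) $R_{sim}$ is $k$-concurrent; and (iii) every $\wf$-process that takes infinitely many steps in $R$ causes its simulated counterpart to take infinitely many steps in $R_{sim}$, so no infinitely-often-scheduled participant is left undecided. By $k$-concurrent solvability of $T$ via $\A$, $R_{sim}$ satisfies $T$, and by (i)–(iii) the input/output vectors of $R$ and $R_{sim}$ agree on participating/deciding processes, so $R$ satisfies $T$. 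Finally, since $\Vomega_k$ is equivalent to $\Nomega_k$ in every environment~\cite{Z10} and the $\fd$-processes only ever query $\D'=\Vomega_k$ (or $\Nomega_k$), this yields an algorithm solving $T$ with $\Nomega_k$ in every $\E$.

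I expect the main obstacle to be (iii), the liveness bookkeeping of the composed simulation: I must show that the ``at most $k$ simulated processes run forever'' guarantee of the outer layer lines up with the ``$k$-concurrent, hence some simulated $\A$-process decides'' guarantee of the BG layer, and that this decision is actually observed — i.e. that turns are allocated fairly enough across the nested leader-consensus instances and BG boxes that every correct real $\wf$-process eventually reads a decided value rather than stalling behind a slow simulated process. This is where the precise interleaving discipline (round-robin over simulated steps, fair scheduling of consensus instances per $\Vomega_k$ index) has to be pinned down; the safety parts (agreement on simulated steps, input propagation, the $k$-concurrency invariant) follow from the cited simulations with only routine adaptation.
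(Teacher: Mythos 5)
Your plan follows the same two-layer architecture as the paper's proof: an outer simulation of $k$ codes $p'_1,\ldots,p'_k$ driven by leader-based consensus with the components of $\Vomega_k$ as leader oracles (this is exactly the paper's Figure~\ref{fig:vomegabg} and Theorem~\ref{th:kcodes}), and an inner BG-style simulation by those $k$ codes of the $\nbfinal$ processes of $\A$ under a ``smallest-id undecided participant first'' discipline to enforce $k$-concurrency. The safety side of your argument (agreement on simulated steps, input propagation, the $k$-concurrency invariant via the count of blocked agreements) matches the paper.

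However, there is a genuine gap at precisely the point you flag as the main obstacle, and it is not just bookkeeping: with \emph{standard} BG-simulation and safe-agreement boxes, the composed construction can deadlock. The middle-layer simulators $p'_j$ whose $\Vomega_k$ index never stabilizes may take only finitely many steps and thus leave a safe-agreement permanently blocked. The usual BG liveness argument (``at most $k-1$ blocked codes, so a live simulator always finds a free one'') breaks down here because your discipline restricts the simulators to the currently undecided participants: once the number of undecided participating codes drops to $k-1$ or fewer (near the end of the run, or when few processes have arrived), \emph{all} admissible codes can be simultaneously blocked, there is no fresh participant to add, and a live simulator has nothing to advance. The paper resolves this by using \emph{Extended} BG-simulation~\cite{Gafni09} rather than standard BG-simulation: when a simulator finds every code it may simulate blocked and no code left to add, it \emph{aborts} all blocked agreements and resumes, and the invariant that the number of simulated codes never falls below the number of simulators taking steps then yields progress. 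Without this abort mechanism (or an equivalent device), property (iii) of your composition does not hold, so you should replace the standard safe-agreement boxes in your inner layer with abortable ones and redo the liveness argument accordingly.
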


\subsection{Task hierarchy}
From Theorems \ref{th:nec} and \ref{th:omega}, we deduce:

\begin{theorem}
\label{th:hierarchy}
Let $T$ be a task that can be solved $k$-concurrently
but not $(k+1)$-concurrently. 
In every environment $\E$, $\Nomega_k$ is the weakest failure
detector  to solve $T$ in $\E$.
\end{theorem}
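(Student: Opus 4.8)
The plan is to obtain Theorem~\ref{th:hierarchy} as an immediate consequence of the two preceding theorems, checking that the hypotheses line up with the definition of ``weakest failure detector.'' Recall that $\D$ is the weakest failure detector to solve $T$ in $\E$ if (i) $\D$ solves $T$ in $\E$, and (ii) $\D$ is weaker than any failure detector that solves $T$ in $\E$. So the proof splits into establishing these two clauses separately, each from one of the hypotheses on $T$.

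First I would handle clause (i). Since $T$ is assumed to be solvable $k$-concurrently, Theorem~\ref{th:omega} applies verbatim (its only hypothesis is that $T$ be $k$-concurrently solvable), and it yields that $\Nomega_k$ solves $T$ in $\E$, for the given arbitrary environment $\E$. Next I would handle clause (ii). Since $T$ is assumed \emph{not} to be solvable $(k+1)$-concurrently, Theorem~\ref{th:nec} applies (its hypothesis is exactly that $T$ cannot be solved $(k+1)$-concurrently): for every failure detector $\D$ that solves $T$ in $\E$, $\Nomega_k$ is weaker than $\D$ in $\E$. Putting the two clauses together gives that $\Nomega_k$ is the weakest failure detector to solve $T$ in $\E$, which is the statement.

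There is essentially no obstacle here: the two component theorems are each stated precisely in the vocabulary needed, so no translation between notions of solvability is required (Theorem~\ref{th:omega} speaks of $k$-concurrent solvability by a restricted algorithm, Theorem~\ref{th:nec} and the present theorem speak of solvability with a failure detector in the EFD sense, and these are the same notions used throughout Section~\ref{sec:main}). The only thing worth a sentence of care is that both component results are quantified ``for every environment $\E$,'' so the conclusion indeed holds for \emph{every} $\E$, as claimed; and that the existence of a weakest failure detector is already guaranteed in general (by the extension of~\cite{JT08} noted earlier), so identifying it with $\Nomega_k$ is consistent. All the genuine difficulty—the DAG/BG-simulation reduction extracting $\Nomega_k$ in Theorem~\ref{th:nec}, and the layered $\Vomega_k$-based simulation solving a $k$-concurrent task in Theorem~\ref{th:omega}—has already been discharged, so this final theorem is a one-line corollary.
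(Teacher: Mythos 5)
Your proposal is correct and matches the paper exactly: the paper derives Theorem~\ref{th:hierarchy} as an immediate consequence of Theorems~\ref{th:nec} and~\ref{th:omega}, with the $k$-concurrent solvability hypothesis feeding Theorem~\ref{th:omega} for clause (i) and the non-$(k+1)$-concurrent-solvability hypothesis feeding Theorem~\ref{th:nec} for clause (ii). Nothing is missing.
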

As a corollary, all tasks that can be solved $k$-concurrently
but not $(k+1)$-concurrently (e.g., $k$-set agreement) are equivalent 
in the sense that they require exactly the same amount of information
about failures (captured by $\Nomega_k)$.   


\section{Characterizing the task of strong renaming}
\label{sec:renaming}

To illustrate the utility of our  framework, 
we consider the task of \emph{$(j,\ell)$-renaming}~\cite{ABDPR90}.
The task is defined on $\nbfinal$ ($\nbfinal>j$) processes and assumes that in every run 
at most $j$ processes participate (at least $\nbfinal-j$ elements of 
each vector $I\in\I$ are $\bot$). 
As an output, every participant obtains a unique \emph{name} in the range $\{1,\ldots,\ell\}$ 
(every non-$\bot$ element in each $O\in\O$ is a distinct value in $\{1,\ldots,\ell\}$).

In this section, we first focus on $(j,j)$-renaming 
(also called \emph{strong} $j$-renaming). 
%
Using Theorem~\ref{th:hierarchy},  we show that  the weakest failure detector 
for strong $j$-renaming is $\Omega$  (for each $1 < j <\nbfinal$). 
In other words, strong renaming is equivalent to consensus. 

Note that in strong $2$-renaming  at most $2$ $\wf$-processes concurrently 
execute steps of the algorithm. So the impossibility to achieve strong $2$-renaming 
is equivalent to the impossibility of solving strong $2$-renaming  $2$-concurrently. 
By a simple reduction to the impossibility of wait-free $2$-processes consensus, we 
show (Appendix~\ref{app:renaming}):
\begin{lemma}\label{lem:sr-imp}
Strong $2$-renaming  cannot be solved $2$-concurrently.
\end{lemma}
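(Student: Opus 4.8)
The goal is to show that strong $2$-renaming cannot be solved $2$-concurrently, i.e., there is no restricted algorithm all of whose $2$-concurrent runs satisfy the task. The plan is to derive a contradiction with the classical impossibility of wait-free consensus between two processes~\cite{FLP85,LA87}. The key observation is that when at most two processes ever participate, a $2$-concurrent run is just any fair run in which at most two $\wf$-processes are active, so a strong $2$-renaming algorithm in this setting is simply a wait-free (restricted) algorithm for two processes, say $p_1$ and $p_2$, in which each of them outputs a name in $\{1,2\}$ and the two names are distinct whenever both participate.

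First I would fix two $\wf$-processes, $p_1$ and $p_2$, and consider the restricted algorithm $\A$ restricted to runs where only these two participate (all other input entries are $\bot$, which is consistent with the $(j,j)$-renaming specification for $j=2$). By the assumed property, in every fair run in which both $p_1$ and $p_2$ participate and take infinitely many steps, each outputs a name from $\{1,2\}$ and the names differ; hence one of them outputs $1$ and the other outputs $2$. In a fair run in which only one of them participates, that process must still terminate with some name in $\{1,2\}$.

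Next I would build a two-process consensus protocol from $\A$. The natural reduction: $p_1$ and $p_2$ each run $\A$; a process that obtains name $1$ decides its \emph{own} input value, and a process that obtains name $2$ decides the \emph{other} process's input value (which it can read from shared memory, since the first step of each $\wf$-process is to write its input). Agreement follows because exactly one process gets name $1$ and one gets name $2$ when both participate, so both decide the input of the process that received name $1$; validity is immediate since the decided value is always some participant's input; wait-free termination follows from termination of $\A$. When only one process participates, it gets some name in $\{1,2\}$ and can decide its own input — but here lies the subtlety: a process obtaining name $2$ in a solo run would try to read the other process's input and find $\bot$. I would handle this by having a process that sees the other's input still missing simply decide its own value; this is safe because if the other process has not yet written, a consensus run in which it never participates is valid, and if it writes later, the bivalence/indistinguishability argument of the impossibility proof still applies. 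Cleaner still, I would invoke the impossibility in the form ``no wait-free two-process protocol solves consensus,'' and argue that the constructed protocol is a correct wait-free consensus protocol, contradiction.

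The main obstacle is making the solo-execution case airtight: a renaming algorithm need not tell a process whether it is alone, so the mapping ``name $2$ $\mapsto$ other's input'' must be reconciled with runs where the other process is absent or arbitrarily delayed. The way around it is the standard asynchronous indistinguishability argument: take an execution where $p_1$ runs solo and decides via $\A$; then extend it so that $p_2$ participates with a conflicting input — $p_2$ must then receive the complementary name and, following the reduction, adopt $p_1$'s input, preserving agreement; symmetrically from $p_2$'s side. Since these behaviors are forced and consistent, the reduction yields a genuine wait-free consensus protocol for two processes, contradicting the classical impossibility result and completing the proof.
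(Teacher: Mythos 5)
Your overall strategy --- reduce to the impossibility of wait-free two-process consensus via the rule ``name $1$ $\mapsto$ own input, name $2$ $\mapsto$ other's input'' --- is the same as the paper's, and you correctly identify the one genuine difficulty: a process may obtain name $2$ in a \emph{solo} run, in which case the other's input is not available. But your patch for that case does not work. If a process that gets name $2$ and finds the other's input still $\bot$ falls back to deciding its \emph{own} input, then agreement fails: let $p_1$ run solo, obtain name $2$, and decide its own input $a$; now extend the run so that $p_2$ arrives with input $b\neq a$. By uniqueness of names $p_2$ obtains name $1$ and, by your rule, decides its own input $b$. Both processes have decided, on different values. Your closing claim that ``$p_2$ must then receive the complementary name and adopt $p_1$'s input'' is exactly backwards in this case --- the complementary name is $1$, which under the reduction means $p_2$ keeps its own value. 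So the indistinguishability argument you appeal to does not rescue the construction; the constructed protocol is simply not a consensus protocol, and no contradiction is obtained.

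The missing idea is the paper's use of the hypothesis $j<\nbfinal$, i.e., that there are at least $3$ processes while only $2$ names exist. Each process decides some fixed name in $\{1,2\}$ in its (deterministic) solo run, so by pigeonhole two processes, say $p_1$ and $p_2$, obtain the \emph{same} solo name, WLOG $1$. Running the reduction on \emph{these} two processes makes the problematic case vacuous: neither can obtain name $2$ while solo, so whenever a process gets name $2$ the other has already registered its input, the rule ``name $2$ $\mapsto$ other's input'' is well defined, and agreement and validity follow from uniqueness of names. You cannot fix $p_1$ and $p_2$ arbitrarily; the choice of the pair is the crux of the proof. With that selection step added (and the arbitrary-pair version deleted), your argument becomes the paper's.
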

By reducing to the impossibility of Lemma~\ref{lem:sr-imp}, 
we get a more general result:
\begin{theorem}
\label{th:sr-imp}
For all $1<j<\nbfinal$,  strong $j$-renaming  cannot be solved $2$-concurrently.
\end{theorem}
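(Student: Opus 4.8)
The plan is to reduce from the case $j=2$, established in Lemma~\ref{lem:sr-imp}, to general $j$ with $1 < j < \nbfinal$. Suppose, for contradiction, that for some such $j$ there is a restricted algorithm $\A$ that solves strong $j$-renaming $2$-concurrently, i.e., $\A$ uses $\nbfinal$ \wf-processes, assumes at most $j$ participate, outputs names in $\{1,\ldots,j\}$, and every $2$-concurrent run of $\A$ satisfies the task. I want to extract from $\A$ a restricted algorithm $\B$ for strong $2$-renaming on (say) two designated \wf-processes $p_1,p_2$ such that every $2$-concurrent run of $\B$ satisfies strong $2$-renaming; this contradicts Lemma~\ref{lem:sr-imp}.

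The key idea is to have a fixed set of $j-2$ ``phantom'' \wf-processes $p_3,\ldots,p_j$ that $p_1$ and $p_2$ will BG-simulate (or simply run in-line as additional threads), feeding them distinct, fixed inputs so that, together with $p_1$ and $p_2$'s own inputs, all $j$ positions of the input vector of $\A$ are occupied. Concretely, in $\B$, process $p_i$ ($i\in\{1,2\}$) first writes its own input and then executes steps of $\A_{p_i}^{\wf}$ together with steps of $\A_{p_3}^{\wf},\ldots,\A_{p_j}^{\wf}$ on behalf of the phantoms, using the standard BG-simulation / agreement-based code of~\cite{BG93b,BGLR01} so that the phantom threads make progress whenever either real process keeps taking steps; $p_i$ outputs the name that $\A_{p_i}^{\wf}$ decides. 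The correctness conditions of $\A$ then hand us exactly what we need: (a) all decided names lie in $\{1,\ldots,j\}$ — but that is too weak, so I instead observe that $\A$ guarantees \emph{distinctness}, hence the two names output by $p_1$ and $p_2$ are distinct, and (b) the names of $p_1,p_2$ together with the (at most $j-2$) names decided by the phantoms form $j$ distinct values in $\{1,\ldots,j\}$; in particular $p_1$ and $p_2$ get distinct names in $\{1,\ldots,j\}$. To actually land in the range $\{1,2\}$, I note that strong $j$-renaming can be followed by an order-preserving collapse, or more simply: I do not need $p_1,p_2$'s raw names to be in $\{1,2\}$. Instead I let the two real processes, after obtaining their distinct names from $\A$, run a trivial wait-free ``relative rank'' step among themselves (each reads a shared array indexed by the $\A$-name; the smaller-named process takes $1$, the larger takes $2$) — this is wait-free in a $2$-concurrent run and yields distinct names in $\{1,2\}$.

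The main obstacle is checking that the simulation is \emph{sound and live in exactly the $2$-concurrent runs we care about}: we must guarantee that whenever a $2$-concurrent run of $\B$ has a real process taking infinitely many steps, the corresponding simulated run of $\A$ is a legal run of $\A$ that is itself $2$-concurrent (at most two of $p_1,\ldots,p_j$ undecided at any time) and in which that real process's thread $\A_{p_i}^{\wf}$ takes infinitely many steps, so that the task guarantee of $\A$ applies. Here I must be careful with the phantoms: the BG-simulation only guarantees progress for $n-k$ of the simulated processes when $k$ simulators block, so with two simulators and $j$ simulated threads I cannot keep all $j$ threads live — but I do not need to. I only need the simulated run to be a legal (fair-to-the-environment-free) run of a \emph{restricted} algorithm and to be $2$-concurrent. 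Since $\A$ is restricted there is no failure detector and no fairness obligation on the synchronization side, so any schedule of $\A^{\wf}$ steps is a legal run; the $2$-concurrency of the simulated run follows because the phantoms are given their inputs up front and each undecided phantom, being BG-simulated by a non-blocked simulator, eventually decides, so at any time the set of undecided participants among $p_1,\ldots,p_j$ has size at most two provided we schedule the phantom threads to completion before advancing past the corresponding ``concurrency budget'' — this scheduling discipline is the delicate point and is where I expect to spend the most care, but it is a routine adaptation of the $k$-concurrent simulation machinery already invoked in Section~\ref{subsec:ome}. Granting this, every $2$-concurrent run of $\B$ in which a real process takes infinitely many steps simulates a $2$-concurrent run of $\A$ in which the matching thread takes infinitely many steps and hence decides, so that process decides in $\B$; and the output distinctness of $\A$ plus the final relative-rank step give a valid strong $2$-renaming output. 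This contradicts Lemma~\ref{lem:sr-imp}, completing the proof.
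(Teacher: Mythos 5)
Your reduction has two genuine gaps, and each sits exactly where the real difficulty of strong renaming lives. First, the ``relative rank'' collapse at the end does not work. Suppose $p_1$ runs alone, obtains name $s_1$ from $\A$, writes it, reads the shared array, sees only $s_1$, computes rank $1$ and outputs $1$; later $p_2$ arrives, obtains $s_2<s_1$, sees both names, computes rank $1$ and also outputs $1$. This is a perfectly legal $2$-concurrent (indeed $1$-concurrent) run of your $\B$, and it violates distinctness. Collapsing $j$ distinct names in $\{1,\ldots,j\}$ down to ranks in $\{1,2\}$ is not a ``trivial wait-free step''---it is essentially strong renaming again, which is why no such shortcut can exist. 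The only way the collapse could be made consistent is if both $p_1$ and $p_2$ could determine the \emph{full} multiset of names taken by all $j$ simulated processes, which brings us to the second gap: you cannot guarantee that all phantoms decide, nor that the simulated run of $\A$ is $2$-concurrent. In a $2$-concurrent run of $\B$, $p_2$ may simply stop taking steps after writing its input and after beginning a BG-agreement for some phantom. That phantom is then blocked forever; if it had already written its input it is a permanently undecided participant, and together with $p_2$'s own (undecided, participating) thread and $p_1$'s thread the simulated run of $\A$ has three undecided participants at once. So $\A$'s $2$-concurrent guarantee never applies, and $p_1$ has no reason to decide. The scheduling discipline you defer to (``run phantoms to completion before spending the concurrency budget'') cannot be enforced by two asynchronous simulators without extra synchronization power; the machinery of Section~\ref{subsec:ome} that you point to resolves exactly this blocking problem only by using $\Vomega_k$, which is not available in a restricted reduction.

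The paper routes around both obstacles differently: it first converts the hypothetical $2$-concurrent strong $j$-renaming algorithm into a $1$-resilient one among the $j$ real processes (only the one or two smallest-id undecided participants are allowed to take steps of $\A$ at any time, which keeps the run of $\A$ $2$-concurrent because at most one process can stall), and then invokes the result of~\cite{Gafni09} that $1$-resilient solvability of strong $j$-renaming yields a wait-free ($2$-concurrent) strong $2$-renaming algorithm, contradicting Lemma~\ref{lem:sr-imp}. That second step is precisely the nontrivial simulation your proposal tries to improvise; if you want a self-contained argument you would need to reprove it, not replace it with a rank collapse.
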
   
Proposition~\ref{prop:1conc}, Theorem~\ref{th:hierarchy}, and Theorem~\ref{th:sr-imp} imply:

\begin{corollary}
For all $j$ $(1< j<\nbfinal$),    in every environment $\E$, 
$\Omega$ is the weakest failure detector for solving strong $j$-renaming in $\E$.
\end{corollary}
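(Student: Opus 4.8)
The plan is to read the corollary straight off the task hierarchy of Theorem~\ref{th:hierarchy}, instantiated at concurrency level $k=1$. The only thing to establish is that for strong $j$-renaming the maximal level of concurrency at which it is solvable is exactly $1$; everything else is then supplied by the cited results together with the equivalence $\Nomega_1 \equiv \Omega$.

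First I would pin down the concurrency level of $T = $ strong $j$-renaming. On the one hand, strong $j$-renaming is a task, so by Proposition~\ref{prop:1conc} it is $1$-concurrently solvable. On the other hand, Theorem~\ref{th:sr-imp} states that for every $j$ with $1<j<\nbfinal$, strong $j$-renaming cannot be solved $2$-concurrently. Combining these, $T$ can be solved $1$-concurrently but not $2$-concurrently; that is, $T$ sits at level $k=1$ of the classification, which is precisely the hypothesis of Theorem~\ref{th:hierarchy} with $k=1$.

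Next I would apply Theorem~\ref{th:hierarchy} to $T$ at $k=1$: in every environment $\E$, the weakest failure detector to solve strong $j$-renaming in $\E$ is $\Nomega_1$. Finally, recalling from Section~\ref{sec:solvetask} that $\Nomega_1$ is equivalent to $\Omega$, and that weakest failure detectors are determined only up to the equivalence relation of Section~\ref{sec:solvetask}, I conclude that $\Omega$ is the weakest failure detector for solving strong $j$-renaming in every environment $\E$. Since all three cited ingredients hold uniformly for every $j$ with $1<j<\nbfinal$, the corollary follows for all such $j$ simultaneously.

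There is essentially no hard step here; the entire content has been pushed into Proposition~\ref{prop:1conc}, Theorem~\ref{th:sr-imp}, and Theorem~\ref{th:hierarchy}. The only point I would state with care is that the hypotheses of Theorem~\ref{th:hierarchy} are met \emph{exactly}, namely that $T$ is solvable at level $k=1$ and not at level $k+1=2$, which is what Proposition~\ref{prop:1conc} and Theorem~\ref{th:sr-imp} jointly deliver. I would also note explicitly that, because weakest failure detectors are unique only up to equivalence, replacing $\Nomega_1$ by the equivalent $\Omega$ in the final statement is legitimate.
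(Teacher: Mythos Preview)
Your proposal is correct and matches the paper's own argument essentially verbatim: the paper simply states that the corollary follows from Proposition~\ref{prop:1conc}, Theorem~\ref{th:hierarchy}, and Theorem~\ref{th:sr-imp}, which is exactly the chain you spell out (with the additional observation that $\Nomega_1 \equiv \Omega$). Your explicit remark that weakest failure detectors are determined only up to equivalence is a nice point of care that the paper leaves implicit.
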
 
In fact, there exists a generic algorithm (Appendix~\ref{app:solvingrenaming}) 
that, for all $k=1,\ldots,j$, solves $(j,j+k-1)$-renaming in all 
$k$-concurrent runs,
and thus $(j,j+k-1)$-renaming can be solved using $\Nomega_k$.
For some values of $k$ and $j$, $(j,j+k-1)$-renaming can be shown to 
be impossible to solve $(k+1)$-concurrently, for others determining 
the maximal level of concurrency of $(j,j+k-1)$-renaming is still an 
open question~\cite{CR10}.

%
%
%
%
%


\section{Conclusion}
\label{sec:conclusion}


This paper introduces a new model of distributed computing 
with failure detectors that allows processes to cooperate. 
A process in this model is able to advance the computation of
other participating processes in the way used previously only in  
asynchronous simulations~\cite{BG93b,BGLR01,Gafni09,GG11}, while using
failure detectors to overcome asynchronous impossibilities.   
In our new framework, we derive a complete characterization of
distributed tasks, based on their maximal ``concurrency level'': 
class $k$ ($1,\ldots,\nbfinal$) consists of tasks that can be solved 
at most $k$-concurrently, and all tasks in the class are equivalent 
to $k$-set agreement.

Our framework does not have to be tied to wait-freedom. We can 
think of its generalization to any progress condition on computation
processes encapsulated, e.g., in an \emph{adversary}~\cite{DFGT11}. Therefore, we can pose
questions of the kind: what is the weakest failure detector to solve
a task $T$ in the presence of an adversary $\A$?
This gives another dimension to the questions explored in this paper.

\remove{

Our formalism is perfectly suitable for tasks, since 
the correctness of outputs in a task solution is determined solely by the participating set.  
An interesting open question is how to extend it to more general classes 
of distributed computing problems in which
failures of participating processes may affect correctness,  
such as NBAC~\cite{Gra78,DFGHKT04}, FTME~\cite{DFGK05}, etc.

\section*{Acknowledgements}
The comments of anonymous reviewers on an earlier version of this paper 
are gratefully acknowledged.
}




\newpage

\bibliography{references}

\newpage
\appendix
\appendix
\section{Proof for $1$-concurrent solvable (Section~\ref{sec:kconc})}

\newenvironment{result-repeat}[2]{\begin{trivlist}
\item[\hspace{\labelsep}{\bf\noindent {#1}~\ref{#2} }]\it}%
{\end{trivlist}}

\begin{result-repeat}{Proposition}{prop:1conc}
Every task is $1$-concurrent solvable.
\end{result-repeat}
\begin{proof}
Each $\wf$-process $p_i$ executes the following code
(1) writes its input, 
(2) reads the other inputs already written getting a vector $I$ such
that $I[i]=\bot$, and
(3) reads all the other outputs already written getting  a vector $O$.
If $O$ is only composed with $\bot$ then $p_i$ is the first process,  it chooses an 
	output according to its input and $\Delta$.
Otherwise let $I'$ obtained from $I$ by replacing the $i$-th item with the
	input value of $p_i$.
By definition of tasks, if $(I,O)\in\Delta$, there exists a vector
	$O'$ obtained from $O$ by replacing the $i$ item by a non $\bot$ value 
	such that $(I',O') \in \Delta$.
Then $p_i$ decides and outputs value $O'[i]$.
Let $R$ be a 1-concurrent run, by an easy induction on the number of participating processes 
we prove that $R$ satisfies $T$.
\end{proof}

\section{Proof for the reduction to $\Nomega_k$ (Section~\ref{subsec:reduc})}

The algorithm sketched in Figure~\ref{fig:red} describes the
steps to be taken by \fd-processes $q_1,\ldots,q_n$  to emulate $\Nomega_k$.  
First we describe the asynchronous algorithm $\A_{\id{sim}}$ used by the \wf-processes to
simulate runs of $\A$, given a sample of the output of $\D$.
Then we describe how the \fd-processes use multiple simulated runs of
$\A_{\id{sim}}$
to emulate the output of $\Nomega_k$.  


\paragraph{Asynchronous simulation of $\A$.}

Following the technique of Chandra et al.~\cite{CHT96}, we
represent a sample of the failure-detector output in the form of a directed
acyclic graph (DAG). 
The DAG is constructed by the \fd-processes by periodically querying $\D$ and
collecting the output values: 
every vertex of the DAG has the form $[q_i,d,k]$ which conveys 
that the $k$-th query of $\D$ performed by process $q_i$ returned value $d$.
An edge between vertexes $[q_i,d,k]$ and $[q_j,d',k']$ conveys that 
the $k$-th query of $\D$ performed by $q_i$ \emph{causally
precedes}~\cite{Lam78a} the $k'$-th query of $\D$ performed by $q_j$.

As in~\cite{Z10,GK11-setcon}, any such DAG $G$ 
can be used to construct a \emph{restricted} algorithm
$\A_{\id{sim}}$.

In $\A_{\id{sim}}$  the \wf-processes 
 $p_1,\ldots,p_n$  simulates runs of $\A$.
The \wf-processes obtain input values for $T$ and 
perform two parallel tasks.
First, the \wf-processes take steps on behalf of $\A^{\wf}$.  
Second, they use BG-simulation~\cite{BG93a,BGLR01} 
to simulate a run of $\A^{\fd}$ on
$q_1,\ldots,q_n$. But to simulate step of $\fd$-process instead of $\D$ they use the information
provided by $G$. More precisely, in the simulation, every \fd-process $q_i$ takes steps as prescribed by
$\A^{\fd}$,
except that  when $q_i$ is about to query $\D$, 
it chooses the next vertex $[q_i,d,k]$  
causally succeeding the latest simulated steps of $\A^{\fd}$ 
of all \fd-processes seen by $q_i$ so far. 
If $G$ was constructed in a run of $\A$ with failure pattern $F$, 
it is guaranteed that  
 (1) every finite run simulated by $\A_{\id{sim}}$ is a run of
$\A$ with failure pattern $F$,
and 
(2) if the run of $\A_{\id{sim}}$ 
contains infinitely many simulated steps of processes in $\correct(F)$ then the
simulated run is a fair run of  $\A^{\fd}$ with failure pattern $F$~\cite{Z10,GK11-setcon}. 

$\A^{\fd}$ does not have inputs. 
Therefore, the simulation tries to promote all $\nbfinal$ \fd-processes (but succeed to take step for a \fd-process $q_i$ if there is enough value for $q_i$ in $G$).

If the simulated run of $\A$ generates an output value for $p_i$,
$p_i$ outputs this value and leaves the computation.   
Note that since $T$ cannot be solved $(k+1)$-concurrently, and all runs of $\A$ are safe,
there must be a $(k+1)$-concurrent (simulated) run of $\A$ in which some participating process
takes infinitely many steps without outputting a value.

\begin{figure}[tbp]
\hrule \vspace{1mm} {\small
\begin{tabbing}
 bbb\=bb\=bb\=bb\=bb\=bb\=bb\=bb\=bb\=bb\=bb \=  \kill
\nnll\> \textbf{for all} $I_0$, input vectors of $T$ 
	(in some order) \textbf{do} \\
\>\>\>\>\>\>\>  /* \texttt{All possible inputs for $p_1,\ldots,p_{n}$} */ \\
\nnll\>\> \textbf{for all} $\pi_0$, permutations of $p_1,\ldots,p_n$  
	(in some order) \textbf{do} \\
\>\>\>\>\>\>\> /* \texttt{All possible ``arrival orders''} */ \\
%
\nnll\>\>\> $P_0 := $ the set of first $k+1$ \wf-processes in $\pi_0$\\
\nnll\label{line:explore0}\>\>\> \textit{explore}$(I_0,\bot,P_0,\pi_0)$\\
\\
\nnll\>\textbf{function} \textit{explore}$(I,\sigma,P,\pi)$ \\
\nnll\label{line:output}\>\> $\Nomega_k\id{-output}_i :=$ $n-k$ \fd-processes that appear the latest 
in $\alpha_i(I,\sigma)$\\
\>\>\>\>\> (any $n-k$ \fd-processes if not possible)\\ 
\nnll\label{line:decide}\>\> \textbf{if} $\exists q_j\in\Pi^{\fd}$: 
	$\forall \sigma'\in \textit{dom}(\alpha_i)$, 
	$\exists \sigma''$, a prefix of $\sigma'$: $\alpha_j(I,\sigma'')$ 
	is deciding \textbf{then}\\
\>\>\>\>\>\>\> /* \texttt{If all schedules explored so far were found deciding by $q_j$} */\\
\nnll\label{line:adopt}\>\>\> adopt $q_j$'s simulation\\  
\nnll\label{line:continue}\>\> \textbf{else}\\
\nnll\>\>\> $N :=$ the set of undecided processes in $(I,\sigma)$\\
\nnll\>\>\> \textbf{for all} $p_j\in P-N$  \textbf{do} 
				\hspace{3mm} /* \texttt{For each decided process in $P$} */\\
\nnll\>\>\>\> $P:=P-\{p_j\}$\\
\nnll\label{line:replace}\>\>\>\> $P:=P\cup \{$the first process in $\pi$ that does not appear in $\sigma\}$\\
\>\>\>\>\>\>\> /* \texttt{Replace $p_j$ with the next non-participant in $\pi$} */\\ 
\nnll\label{line:corridor}\>\>\> \textbf{for all} $P'\subseteq P$  (in
some order consistent with $\subseteq$) \textbf{do} \\ 
\>\>\>\>\>\>\> /*\texttt{For all ``sub-corridors''} */ \\
\nnll\label{line:nextstep}\>\>\>\> \textbf{for all} $p_j\in P'$ (in $\pi$) \textbf{do}\\
\nnll\label{line:explore}\>\>\>\>\> \textit{explore}$(I,\sigma\cdot p_j,P',\pi)$
\end{tabbing}
\vspace{-2mm}\hrule }
\caption{\small Deriving $\Nomega_k$: code for each \fd-process $q_i$.}
\label{fig:red}
\end{figure}

\paragraph{Extracting $\Nomega_k$.}

Now to derive $\Nomega_k$, each $\fd$-process in $i\in\{1,\ldots,k\}$ collects
the output of $\D$ in $G$ and simulates locally multiple
$(k+1)$-concurrent runs of
$\A_{\id{sim}}$.
The runs are simulated in the \emph{corridor-based} depth-first manner~\cite{GK11-setcon}
that works as follows.

We assume a total order on the subsets $P\subseteq \Pi^{\wf}$ so 
that if $P\subset P'$ then $P$ precedes $P'$ in the order. 
Each initial state $I$ and each \emph{schedule} $\sigma$, a sequence 
	specifying the order in which $p_1,\ldots,p_{n}$ take steps of $\A_{\id{sim}}$, 	
	determine a unique run of $\A_{\id{sim}}^{\fd}$ simulated at process $q_i$, denoted 
	$\alpha_i(I,\sigma)$.

   
For a given input vector $I$ and a given permutation $\pi$ of $p_1,\ldots,p_n$, that describes 
the order in which the \wf-processes ``arrive'' at the computation.
Initially, we select a set $P$ of the first $k+1$ processes in $\pi$ as the participating set.
Subsets $P'\subseteq P$ are then explored as ``corridors'' (line~\ref{line:explore}), 
in the deterministic order, from the narrowest (solo) corridors to wider and wider ones.
Recursively, we go through simulating all runs in which only \wf-processes in $P'$ take steps.
In the course of simulation, if a participating  \wf-process $p_j$ decides, we replace it 
with a process that has not yet taken steps in the current computation (line~\ref{line:replace}).
Since we only replace a decided process with a ``fresh'' non-participant, the participating set 
keeps the size of $k+1$ or less processes. 
This procedure is repeated until every \wf-process decides.
Thus, every simulated run is $(k+1)$-concurrent.
Once the exploration of the current corridor is complete (the call of \textit{explore} in 
line~\ref{line:explore} returns), we proceed to the next corridor, etc. 

If, at some point, $q_i$ finds out that another \fd-process $q_j$ made more progress 
in the simulation (simulated more runs than $q_i$),  then $q_i$ ``adopts'' the simulation of $q_j$ 
(line~\ref{line:adopt}) by adopting $q_j$'s version of the DAG and the map $\alpha_j$~\cite{GK11-setcon}.

The output of $\Nomega_k$ is evaluated as the set of the 
ids of the latest $n-k$ processes in $q_1,\ldots,q_n$ that appear in 
the run of $\A_{\id{sim}}^{\fd}$ in the currently simulated run of $\A_{\id{sim}}$ (line~\ref{line:output}).

Recall that $T$ cannot be solved $(k+1)$-concurrently and thus 
there must exist a $(k+1)$-concurrent run of $\A_{\id{sim}}$ in which some participating 
live process never decides.
Since the only reason for the run of $\A_{\id{sim}}$ not to decide is the
absence of some correct process in the simulated $k$-resilient run of $\A_{\id{sim}}^{\fd}$,
and the emulated output 
eventually never contains some correct process---
$\Nomega_k$ is emulated. Thus:

\begin{result-repeat}{Theorem}{th:nec}
Let $T$ be a task that cannot be solved $(k+1)$-concurrently. 
For every environment $\E$,  for every failure detector $\D$ that solves
$T$ in $\E$,  $\Nomega_k$ is weaker than $\D$ in $\E$.
\end{result-repeat}

\begin{proofsketch}
Our reduction algorithm works as follows. 
Every \fd-process $q_i$ runs two parallel tasks.
First, it periodically queries its module of $\D$ and maintains its
directed acyclic graph $G_i$, as in~\cite{CHT96,GK11-setcon}.   
Second, it uses $G_i$ to locally simulate multiple runs of $\A_{\id{sim}}$ and
emulates the output of $\Nomega_k$.
Consider any run of the reduction algorithm.  
Let $F$ be the failure pattern of that run.

First we observe that every simulated run of $\A_{\id{sim}}$ is
$(k+1)$-concurrent. Indeed, initially, exactly $(k+1)$ \wf-processes
participate and a new participant joins only after some participating
\wf-process decides and departs.

Then we show that the correct  \fd-processes eventually perform the same infinite sequence 
	of recursive invocations of 
	\textit{explore}: $\textit{explore}(I,\bot,P_0,\pi)$ invokes $\textit{explore}(I,\sigma_1,P_1,\pi)$, 
	which in turn invokes $\textit{explore}(I,\sigma_2,P_2,\pi)$,
        etc. (line~\ref{line:corridor}).
Indeed, all \fd-processes perform the simulations in the same order and
since, the task is not 
	$(k+1)$-concurrently solvable, there must be a never deciding
        $(k+1)$-concurrent run of $\A_{\id{sim}}$.
Since all these $P_{\ell}$ are non-empty, there exists $\ell^*$ and $P^*$ 
	such that $\forall \ell\geq \ell^*$, $P_{\ell}=P^*$.
Since we proceed from narrower corridors to wider ones, 
$P^*$ is the set of live \wf-processes that never
decide in the  ``first'' never deciding  $(k+1)$-concurrent simulated
run with a schedule $\sigma^*$.

Now we observe that all simulated runs eventually always extend a prefix $\bar\sigma^*$ of $\sigma^*$ in which 
some simulated processes not in $P^*$ already took all their steps in $\sigma^*$. 
Moreover, there is a time after all explored extensions of
$\bar\sigma^*$ only contain steps of processes in $P^*$.
By the properties of BG-simulation~\cite{BG93b,BGLR01}, every
\fd-process that appears only finitely often in the run
of $\A_{\id{sim}}$ simulated by $\sigma^*$ (we called these processes
\emph{blocked} by $\sigma^*$)  eventually never appears 
in all simulated run of  $\A$.
Let $U$ be the set of \fd-processes blocked by $\sigma^*$. 
Since the run of $\A_{\id{sim}}$ simulated by $\sigma^*$ is $(k+1)$-concurrent, 
processes in $U$ eventually never appear  among the last $n-k$ 
processes in $\alpha(I,\sigma)$ (line~\ref{line:output}).

Now we observe that $U$ must contain a correct  (in $F$)  \fd-process.
If it is not the case, i.e., $U$ doesn't contain a correct \fd-process,
then the simulated run of $\A$ is fair and thus the simulated run of $\A$ must be deciding. 

Thus, eventually some correct  \fd-processes never appear in
$\Nomega_k\id{-output}_i$ at every correct \fd-process
$q_i$---$\Nomega_k$ is emulated.
\end{proofsketch}

\section{Proof for solving a $k$-concurrent solvable task with  $\Nomega_k$ (Section~\ref{sec:kconc})}
\label{app:kcodes-kconc}

This section presents a distributed algorithm that uses $\Nomega_k$ to solve, in
any environment, any task that can be solved  $k$-concurrently.
The result could have been obtained from the
simulation of $k$-concurrency using (black-box) $k$-set agreement
objects~\cite{GG11}.
But for the sake of self-containment, we present a (simpler) 
direct construction of a $k$-concurrent run using $\Nomega_k$.\footnote{The construction is similar to the one presented
  in~\cite{GK11-setcon} for 
  the actively $k$-resilient case.}  

First we describe an abstract simulation technique 
that uses $\Vomega_k$ (equivalent to $\Nomega_k$~\cite{Z10}) 
to simulate, in a system of $\nbfinal$ \wf-processes, 
a run of an arbitrary asynchronous algorithm $\B$ on $k$ \wf-processes.

Then we apply this technique to show that, in every environment, 
we  can use $\Vomega_k$ to simulate
a run $R_{sim}$ of any given $\nbfinal$ \wf-processes protocol $\A$.
If $R$ is the current run, we have the following properties: 
(1) $R_{sim}$ only contains steps of participating processes of $R$, 
(2) $R_{sim}$ is $k$-concurrent, and 
(3) every participating \wf-process of $R$ that takes infinitely many steps  is given enough steps in $R_{sim}$ to decide.

\subsection{Simulating $k$ codes using $\Nomega_k$}
\label{app:kcodes}

Suppose we are given a read-write algorithm $\B$ on $k$ \wf-processes,
$p_1',\ldots,p_k'$. 
Assuming that $\Vomega_k$ is available, the algorithm in
Figure~\ref{fig:vomegabg} 
describes how $n$ \emph{simulators}, \wf-processes $p_1,\ldots,p_n$ can simulate
an infinite run of $\B$.

The simulation is similar in spirit to
BG-simulation~\cite{BG93b,BGLR01}.
Every simulator $p_i$ first registers its participation in the shared memory
and then tries to advance simulated \wf-processes $p_1',\ldots,p_{\min(k,m)}'$,
where $m$ is the number of simulators that $p_i$ has witnessed
participating.

To simulate a step of $p_j'$, simulators agree on the view of the
\wf-process after performing the step. 
However, instead of the BG-agreement protocol of~\cite{BG93b,BGLR01},  
we use here a \emph{leader-based} consensus algorithm~\cite{CHT96}.
In the algorithm, a process periodically (in every round $r$ of
computation), queries the current leader to
get an estimate of the decision.  

Since in our algorithm both \wf-processes and \fd-processes can be
elected leaders, we modify the algorithm of~\cite{CHT96} as
follows.
When a process wants to get an estimate of the decision (say in round
$r$), it publishes a query $(\id{query},\id{est}',r)$ in the shared
memory (proposing its current estimate $\id{est}'$), waits until the
current leader publishes a response  $(\id{est},r)$, and adopts the estimate.   
For simplicity, we assume that every process (be it a \wf-process or a
\fd-process) periodically scans the memory to find new queries of the
kind $(\id{query},\id{est}',r)$ and responds to them by publishing 
one of the proposed estimates. 
Furthermore, we assume that each \fd-process periodically updates 
the shared array $\Vomega_k\id{-\fd}[1,\ldots,k]$ with the output of its
module of  $\Vomega_k$.
Recall that eventually some position $\Vomega_k\id{-\fd}[j]$
($j\in\{1,\ldots,k\}$) stabilizes on the identity of some correct \fd-process.
 
The resulting algorithm terminates under the condition that all
\wf-processes eventually agree on the same correct leader.
The instance of the consensus algorithm used to simulate $\ell$-th
step of \wf-process $p_j'$ is denoted by $\cons_{j,\ell}$.

The rule to elect the leader is the following.
As long as the number of participating simulators is $k$ or less, the
participating simulator with the $j$-th smallest identity acts as a leader
for simulating steps of $p_j'$. 
When the number of participating simulators exceeds $k$, 
the leader for simulating steps of $p_j'$ is given by $\Vomega_k\id{-\fd}[j]$. 

In both cases, at least one simulated \wf-process is eventually associated with
the same correct leader. Thus, at least one
simulated \wf-process makes progress in the simulation.   

The algorithm also assumes that a simulator $p_i$ may decide to leave the
simulation if the simulated run produced a desired output
(line~\ref{line:depart}).
We use this option in the next section.

\begin{figure}[tbp]
\hrule \vspace{1mm} {\small
\begin{tabbing}
 bbb\=bb\=bb\=bb\=bb\=bb\=bb\=bb \=  \kill
Shared variables: \\
\> $R_j$, $j=1,\dots,m$, initially $\bot$\\
\> $V_j$, $j=1,\ldots,k$,  initially the initial state of $p_j'$\\
\> $\Vomega_k\id{-\fd}[j]$, $j=1,\dots,n$, initially $q_1$\\
Local variables: \\
\>\> $\Leader_j$, $j=1,\ldots,k$, initially $p_1$\\
\>\> $\ell_j$, $j=1,\ldots,k$, initially $1$\\
\>\> $v_j$, $j=1,\ldots,k$, initially $\bot$\\
\\
Task 1:\\
\nnll\label{line:registerd}\>  $R_i:=1$ \\
\nnll\> $\id{undecided} := \true$\\
\nnll\>  {\bf for} $j=1,\ldots,k$ {\bf do} $v_j := \{V_1,\ldots,V_k\}$\\
\nnll\> {\bf while} \id{undecided} {\bf do}\\
\nnll\label{line:parts}\>\> {\bf for} $j=1,\ldots,\min(|\id{pars}|,k)$ {\bf do}\\
\nnll\>\>\> perform one more step of $\cons_{j,\ell_j}(v_j)$ using $\Leader_j$
as a leader\\
\nnll\>\>\> {\bf if} $\cons_{j,\ell_j}(v_j)$ returns $v$  {\bf then}
\hspace{1cm} \{ The next state of $p_j'$ is decided \}\\ 
\nnll\>\>\>\>  $V_j := v$ \hspace{1cm} \{ Adopt the decided state of
$p_j'$ \}\\
\nnll\>\>\>\>  simulate the next step of $p_j'$ in $\B$\\
\nnll\label{line:candepart}\>\>\>\> {\bf if}  $v$ allows $p_i$ to decide  {\bf then}
\hspace{1cm} \{ The simulator can depart \}\\  
\nnll\>\>\>\>\>  $\id{undecided} := \false$\\
\nnll\label{line:depart}\>\>\>\>\>  $R_i := \bot$\\
\nnll\>\>\>\>  $v_j := \{V_1,\ldots,V_k\}$ \hspace{1cm}  \{ Evaluate
the next state of $p_j'$ \}\\
\nnll\>\>\>\> $\ell_j:=\ell_j+1$\\
\\
Task 2:\\
\nnll\label{line:leaders}\>   {\bf while} \id{true} {\bf do}\\
\nnll\>\> $\id{pars}:=\{p_j, R_j\neq\bot\}$\\
\nnll\>\> {\bf if} $|\id{pars}|\le k$ {\bf then}\\
\nnll\>\>\>   {\bf for} $j=1,\ldots,|\id{pars}|$ {\bf do} $\Leader_j :=$
the $j$-th smallest process in \id{pars}\\
\nnll\>\>  {\bf else}\\
\nnll\label{line:leaderf}\>\>\>   {\bf for} $j=1,\ldots,k$ {\bf do} $\Leader_j :=\Vomega_k\id{-\fd}[j]$
\end{tabbing}
\hrule 
}
\caption{Simulating $k$ codes using vector-$\Omega_k$: the program code 
  for simulator $p_i$}
\label{fig:vomegabg}
\end{figure}

\begin{theorem}
\label{th:kcodes}
In every environment, the protocol in Figure~\ref{fig:vomegabg} 
simulates an infinite run of any $k$-processes algorithm $\B$ 
(as long as there is at least one not decided participating simulated process).
Moreover, if $\ell$ simulators participate, i.e., $|\id{pars}|=\ell$, then
at most $\min(k,\ell)$ processes participate in the simulated run.
\end{theorem}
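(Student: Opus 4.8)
\textbf{Proof proposal for Theorem~\ref{th:kcodes}.}

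The plan is to verify two things about the protocol in Figure~\ref{fig:vomegabg}: first, \emph{safety} --- that the sequence of states written to the variables $V_1,\ldots,V_k$ indeed forms a legal run of $\B$ on simulated processes $p_1',\ldots,p_k'$; and second, \emph{liveness} --- that as long as some participating simulated process has not yet decided, infinitely many simulated steps are produced, with the stated bound of $\min(k,\ell)$ participating simulated processes when $\ell$ simulators participate. For safety I would argue that each $\cons_{j,\ell}$ instance is a genuine agreement object: since it is an instance of the leader-based consensus algorithm of~\cite{CHT96} invoked on proposals $v_j=\{V_1,\ldots,V_k\}$, any value it returns is one of the proposed next-states, and all simulators that complete $\cons_{j,\ell_j}$ obtain the same value. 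Hence there is a well-defined ``$\ell$-th state'' of $p_j'$, every simulator writes the same thing to $V_j$ at step $\ell_j$, and since $v_j$ is re-evaluated as $\{V_1,\ldots,V_k\}$ after each committed step, the simulated step of $p_j'$ in $\B$ is computed from a consistent snapshot. The only subtlety is the usual one in BG-style simulations: two simulators may be simulating the same step of $p_j'$ concurrently, but the agreement object guarantees they commit to the same outcome, so the construction is well-defined and corresponds to an interleaving of steps of $p_1',\ldots,p_k'$ that is a legal run of $\B$.

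For the concurrency bound, I would observe directly from the code: a simulator only ever attempts to advance $p_j'$ for $j\le\min(|\id{pars}|,k)$ (line~\ref{line:parts}), and $|\id{pars}|$ never exceeds the number $\ell$ of simulators that ever set $R_i:=1$. So at most $\min(k,\ell)$ distinct simulated indices are ever touched, and the remaining $p_j'$ take no steps --- i.e. at most $\min(k,\ell)$ processes participate in the simulated run. (A simulator departing sets $R_i:=\bot$, which can only shrink $\id{pars}$, so this bound is preserved.)

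The heart of the argument, and the step I expect to be the main obstacle, is liveness: showing that at least one $\cons_{j,\ell}$ instance eventually terminates, so the simulation never gets permanently stuck while some participating simulated process is undecided. Here I would split on the eventual value of $|\id{pars}|$. If eventually $|\id{pars}|\le k$ and stabilizes, then Task~2 eventually fixes each $\Leader_j$ to the $j$-th smallest \emph{participating simulator}; at least one such simulator is correct and takes infinitely many steps (fair runs have a nonempty $\infi^{\wf}$, and in fact a participating undecided \wf-process taking infinitely many steps drives this), so that $\cons_{j,\ell_j}$ instance sees a correct, eventually-stable leader and the leader-based consensus of~\cite{CHT96} terminates. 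If instead $|\id{pars}|>k$ forever, then Task~2 sets $\Leader_j:=\Vomega_k\id{-\fd}[j]$ for all $j$, and by the defining property of $\Vomega_k$ some position $j^*$ eventually stabilizes on a single correct \fd-process; that \fd-process, being correct in $F$, takes infinitely many steps and hence responds to queries of the kind $(\id{query},\id{est}',r)$ infinitely often, so $\cons_{j^*,\ell_{j^*}}$ sees a correct stable leader and terminates. In either case some simulated index advances infinitely often, so the simulated run of $\B$ is infinite; combined with the fact that, by construction, a simulator only stops (sets $\id{undecided}:=\false$) once the simulated run has produced an output for it, this gives exactly the claim. The delicate bookkeeping is to make sure the ``stabilization'' happens simultaneously for the leader assignment and for $\ell_j$ (so that the leader is correct \emph{for the instance currently being run}), which follows because once $\Leader_j$ is fixed and correct, that instance terminates, $\ell_j$ is incremented once, and the next instance again sees the same fixed correct leader.
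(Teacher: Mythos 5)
Your proposal is correct and follows essentially the same route as the paper's proof: safety via the agreement property of the consensus instances, the concurrency bound read off from the loop bound $\min(|\id{pars}|,k)$, and liveness via the same case split on $|\id{pars}|\le k$ (a correct simulator serves as stable leader for some $p_j'$) versus $|\id{pars}|>k$ (some position of $\Vomega_k$ stabilizes on a correct \fd-process). Your added remarks on the stabilization bookkeeping only make explicit what the paper leaves implicit.
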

\begin{proof}
Consider an infinite  run of the algorithm.  
Since every next state of each simulated process $p_j'$ is decided
using a consensus algorithm, every simulator observes exactly the same
evolution of states for every simulated process. 
Thus, the simulated schedule indeed belongs to a run of $\B$.

Now consider the construction of variables $\Leader_1$, $\ldots$, $\Leader_k$ used by the
consensus algorithms $\cons_{1,\ell}$, $\ldots$, $\cons_{k,\ell}$ (lines~\ref{line:leaders}-\ref{line:leaderf}). 
Let $\ell$ be the number of participating simulators.

If $\ell\le k$, the simulator with the $j$-th
smallest identity in $\id{pars}$ is assigned to
be the leader of exactly one simulated process $p_j'$.
Since at least one simulator is correct,
there exists $p_j'$ ($j=1,\ldots,|\id{pars}|$)  such that all
instances $\cons_{j,\ell_j}$ using $Leader_j$ eventually terminate.  
Thus, $p_j'$ accepts infinitely many steps in the simulated run.

If $\ell>k$, at least one
$Leader_j$ ($j=1,\ldots,k$) eventually stabilizes on some correct process identity, 
as guaranteed by the properties of $\Vomega_k$.
Again, $p_j'$ takes infinitely  many steps in the simulated run.   

In both cases, at most $\min(\ell,k)$ simulated processes appear in the produced
run of $\B$, and at least one  simulated process takes infinitely many steps.
\end{proof}

\subsection{Solving a $k$-concurrent task with $\Nomega_k$}
\label{app:kconc}

\begin{result-repeat}{Theorem}{th:omega}
Let $T$ be any  $k$-concurrently solvable task.
In every environment~$\E$, $\Nomega_k$ solves $T$ in~$\E$.
\end{result-repeat}
\begin{proof}
Let $\A$ be the algorithm that solves $T$  $k$-concurrently.
We simply employ the simulation protocol in
Figure~\ref{fig:vomegabg}  (Theorem~\ref{th:kcodes}),
and suppose that the simulated algorithm $\B$ is
Extended BG-simulation~\cite{Gafni09} for $\A$.
More precisely, $\B$ simulates  with $k$ \wf-processes the algorithm $\A$ with  $\nbfinal$
\wf-processes.

Thus, the double simulation is built as follows.
Every process $p_i$ writes its input value of $T$ to the shared
memory and starts the simulation of $k$ processes $p_1',\ldots,p_k'$ 
using the algorithm in Figure~\ref{fig:vomegabg}.
The simulated processes $p_1',\ldots,p_k'$ run, in turn, 
BG-simulation of $\A$ on $\nbfinal$ processes $p_1'',\ldots,p_\nbfinal''$.

Each simulated process $p_j''$ is simulated only if the corresponding
$p_j$ has written its input of $T$ in the shared memory
and $p_j''$ has not yet obtained an output in the simulated run.
Moreover, to make sure that the simulation indeed produces a $k$-concurrent
run, at any point of the simulation, each simulator in $p_j'\in \{p_1',\ldots,p_k'\}$
tries to advance the participating and not yet decided process with
the smallest id.     
If the currently simulated process is found
blocked~\cite{BG93b,BGLR01}, 
i.e., the process cannot advance because another simulator started simulating a step
of it but has not yet finished, $p_j'$ proceeds to the next smallest
undecided participating process in $\{p_1'',\ldots,p_n''\}$.
Since there are at most $k$ simulators, at most $k-1$ undecided participating  processes
can be found blocked and thus there are at most $k$ undecided participating processes at a
time---the resulting simulated run is $k$-concurrent. 

When $p_j''$ obtains an output, the corresponding simulator $p_j$
considers itself ``decided'' (line~\ref{line:candepart}),
writes $\bot$ in $R_i$ (line~\ref{line:depart}) and departs. 

If $p_i'$ cannot make progress because each code it tries to simulate 
is blocked and there are no more codes to add, it ``aborts'' 
all blocked agreements~\cite{Gafni09} and resumes the simulation. 
Since, at each point of time, the number of simulated  codes 
does not get below the number of simulators that take steps, the simulation keeps making progress.
 
Thus, as long as $\ell$ processes $\{p_{j_1},\ldots,p_{j_\ell}\}$ participate, 
only $\min(k,\ell)$ processes in $\{p_{1}'',\ldots,p_{n}''\}$ take
steps, which results in a $k$-concurrent simulated run of $\A$.
Every process $p_j''$ that takes steps eventually decides in a $k$-concurrent run of $\A$ 
and the corresponding simulator $p_j$ departs.
As soon as the decided process $p_i$ departs by writing $\bot$ to $R_i$, we
have one simulator $p_i$ and one simulated process $p_i''$ less.
Therefore, as long as there is a simulator taking steps and the run is fair, the simulated run makes
progress, i.e., more and more participants decide.
Thus, we obtain an algorithm that, in every environment,
solves $T$.  
\end{proof}

\section{Proof for characterizing the task of renaming (Section~\ref{sec:renaming})}
\label{app:renaming}

To illustrate the utility of our  framework, we consider the task of \emph{$(j,\ell)$-renaming}~\cite{ABDPR90}.
The task is defined on $\nbfinal$ ($\nbfinal>j$) processes and assumes that in every 
run at most $j$ processes participate (at least $\nbfinal-j$ elements of each vector $I\in\I$ are $\bot$). 
As an ouput, every participant obtains a unique \emph{name} in the range $\{1,\ldots,\ell\}$ 
(every non-$\bot$ element in each $O\in\O$ is a distinct value in $\{1,\ldots,\ell\}$).

We show first that $(j,j)$-renaming (also called \emph{strong} $j$-renaming) is not $2$-concurrently solvable.
Then we present a generic algorithm that, for all $k=1,\ldots,j$,
solves $(j,j+k-1)$-renaming in all $k$-concurrent run,
and thus $(j,j+k-1)$-renaming can be solved (in IFD) using $\Nomega_k$.

\subsection{Impossibility of $2$-concurrent strong $2$-renaming}

\begin{result-repeat}{Lemma}{lem:sr-imp}
Strong $2$-renaming  cannot be solved $2$-concurrently.
\end{result-repeat}
\begin{proof}
We start with showing that for the special case of $j=2$, strong renaming cannot be solved $2$-concurrently.
Suppose, by contradiction, that there exists a (restricted) algorithm $\A$ that solves $(2,2)$-renaming $2$-concurrently.
Since we assumed $j<n$, we have at least $3$ processes in the system.
By the pigeon-hole principle, there exist two processes that decide on the same name $v\in\{1,2\}$ in their solo runs of $\A$.   Without loss of generality, let these processes be $p_1$ and $p_2$ and let $v$ be $1$. 

Now $p1$ and $p2$ can wait-free solve $2$-processes consensus as follows.
Each process publishes its input and then runs $\A$ until it obtains a name. 
If the name is $1$, the process decides on its input, otherwise it decides on the input of the other process.
Since a process in $\{p_1,p_2\}$ obtains $1$ as a name in a solo run of $\A$, if $1$ is not obtained, then 
the other process participates in the run of $\A$ and, thus, has previously written its input. 
Therefore, every decided value was previously proposed. 
Since every obtained name is distinct, the two processes cannot decide on different values.
This conclude the proof that strong 2-renaming cannot be 2-concurrently solvable.
\end{proof}

\begin{figure}[tbp]
\hrule \vspace{1mm} {\small
\begin{tabbing}
 bbb\=bbB\=bbb\=bbb\=bbb\=bbb\=bbb\=bbb\=bbb\=bbb\=bbb\=bbb\=bbb \= \kill
Shared variables: $R_\ell$, $\ell=1,\ldots,\nbfinal$, initially $\bot$\\ 
\nnll\label{line:regpar}\> $R_i := 1$ \>\>\>\>\>\>\>\>\>\> /* \texttt{register participation} */\\
\nnll\> \textbf{repeat}\\
\nnll\label{line:parset}\>\> $S:=\{p_\ell\;|\; R_\ell\neq\bot\}$ \>\>\>\>\>\>\>\>\> /*\texttt{get the current participating set} */\\
\nnll\label{line:notdecidedset}\>\> $S':=\{p_\ell\;|\; R_\ell=1\}$ \>\>\>\>\>\>\>\>\>
/*\texttt{get the set of not yet decided participants} */\\
\nnll\>\> $min_1:=\min(S')$\\
\nnll\>\> \textbf{if} ($|S'|=1$ ) \textbf{then} $min_2:=min_1$  \textbf{else} 
$min_2:=\min(S' -\min(S'))$\\
\nnll\label{line:checkifmin}\>\> \textbf{if} ($|S|=j$ and ($p_i=min_1$ or $p_i=min_2$))
or ($|S|=j-1$ and $p_i=min_1$) \textbf{then}\\
\nnll\>\>\>\label{line:takestepA} take one more step of $\A$
\>\>\>\>\>\>\>\> /*\texttt{if among two not decided with smallest ids} */\\
\nnll\> \textbf{until} decided\\
\nnll\label{line:decdep}\> $R_i := 0$\\
\nnll\> return the name decided in $\A$
\end{tabbing}
\vspace{-2mm}\hrule }
\caption{\small A $1$-resilient strong $j$-renaming algorithm: code
  for each $\wf$-process $p_i$.}
\label{fig:resrenaming}
\end{figure}

\begin{result-repeat}{Theorem}{th:sr-imp}
For all $1<j<\nbfinal$,  strong $j$-renaming  cannot be solved $2$-concurrently.
\end{result-repeat}   
\begin{proof}
By Lemma~\ref{lem:sr-imp}, we have already the result for $j=2$. 
Suppose, by contradiction, that for some $2<j<\nbfinal$, there exists an (restricted) algorithm $\A$ solving strong $j$-renaming $2$-concurrently.
As we deal here with $2$-concurrent solvability, we are only interested by the $\wf$-processes and their algorithms.
We use $\A$ to solve strong $j$-renaming in all \emph{$1$-resilient}
runs, i.e.,  runs in which at least $j-1$ $\wf$-processes
participate and take infinitely many steps. Recall that at most $j$
$\wf$-processes participate in every run, so either $j-1$ or $j$ processes
take infinitely many steps.
In the algorithm (Figure~\ref{fig:resrenaming}), every process
registers its participation (line~\ref{line:regpar}) and then
periodically checks the current set of participants (line~\ref{line:parset}). 
If it finds out that it is among $2$ processes with the smallest identities
among $j$ participating but not yet processes
(line~\ref{line:checkifmin}), 
then it starts taking steps $\A$ until the algorithm provides $p_i$
with a new name. 
Then $p_i$ declares that it has decided (line~\ref{line:decdep}) and departs. 
    
Note that the resulting run of $\A$ is $2$-concurrent: either
the participating set is of size $j-1$ and only the not yet decided
participant with the smallest identity is allowed to take steps of $\A$ solo, or  
exactly $j$ processes participate and the two not yet
decided processes with the smallest identity are allowed to take step concurrently. 

Now we observe that the run of $\A$ continues as long as there
is at least one not yet decided participant that take steps. 
Indeed, either the participating set is of size $j-1$ and every
participant takes an infinity number of steps (including the not yet decided one with the
smallest identity) 
or exactly $j$ \wf-processes participate and at least one of the not yet
decided processes with the two smallest identity takes an infinity number of steps.     
Thus, every \wf-process that keeps taking steps of $\A$ in the resulting 
$2$-concurrent run eventually decides and departs.
The set of undecided participants gets smaller by one, and the
next \wf-process with the smallest identity joins the $2$-concurrent run
of $\A$. 

But it is shown in~\cite{Gafni09} that if all $1$-resilient runs of a restricted algorithm $\A$ satisfy
strong $j$-renaming then there is a restricted algorithm to solve strong $2$-renaming 
$2$-concurrently---a contradiction with Lemma~\ref{lem:sr-imp}. 
\end{proof}

\subsection{Solving renaming}
\label{app:solvingrenaming}

The  distributed algorithm used to solve  $(j,j+k-1)$-renaming    $k$-concurrently essentially mimics the algorithm of~\cite{ABDPR90,AW04} for wait-free 
$(j,2j-1)$-renaming.

\begin{theorem}
\label{th:kconcrenaming}
For all $1<k\leq j<m$, $(j,j+k-1)$-renaming  can be solved $k$-concurrently.
\end{theorem}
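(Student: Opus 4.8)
\textbf{Proof plan for Theorem~\ref{th:kconcrenaming}.}
The plan is to adapt the classical wait-free $(j,2j-1)$-renaming algorithm of~\cite{ABDPR90,AW04} so that, when run in a $k$-concurrent execution, it uses only names in $\{1,\ldots,j+k-1\}$. Recall the structure of the classical algorithm: each participant repeatedly snapshots the shared memory, computes its \emph{rank} $r$ among the processes it currently sees as competing (those that have not yet committed to a name), and tentatively picks a name from a \emph{free} name space according to a deterministic rule (e.g.\ the $r$-th smallest still-free name, alternating between taking names from the bottom and the top of the range); it commits when its view stabilizes. The key quantitative fact is that, because a process backs off and recomputes whenever it detects a conflicting concurrent participant, the name a process finally commits to is bounded by a function of the number of processes it was ever \emph{concurrent} with, not the total number of participants $j$.

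First I would make precise the bound ``name $\le p + c - 1$'' where $p$ is the process's own position in some arrival order and $c$ is the maximum number of participants that were ever simultaneously undecided together with it. In a $k$-concurrent run we have $c \le k$ throughout, and $p \le j$ since at most $j$ processes ever participate; combining these gives the committed name $\le j + k - 1$. Concretely I would run the induction/analysis exactly as in the wait-free proof, but everywhere the wait-free argument charges ``$2j-1 = j + (j-1)$'' using the worst-case bound ``at most $j-1$ other participants,'' I substitute the $k$-concurrency hypothesis to charge only ``at most $k-1$ other \emph{concurrently undecided} participants.'' The safety argument (no two processes commit the same name) is unchanged from~\cite{ABDPR90,AW04}: it relies only on atomic snapshots and the commit rule, not on any concurrency restriction.

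The second ingredient is liveness: I must argue that in every $k$-concurrent run of the (restricted) algorithm, every participant that takes infinitely many steps eventually commits. This is the standard progress argument for the classical renaming algorithm --- a process can be forced to recompute only finitely often, because each recomputation is triggered by observing a fresh conflicting participant, the set of participants is finite ($\le j$), and ranks are monotone in a suitable sense --- so it carries over verbatim. Hence every $k$-concurrent run satisfies the $(j,j+k-1)$-renaming task, which by the definition of $k$-concurrent solvability is exactly what is required.

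The main obstacle I anticipate is verifying that the name-range bound genuinely degrades gracefully with the concurrency level $k$ rather than with the participation bound $j$: the classical analysis phrases its bound in terms of the number of processes seen in the final snapshot, and one must check that the ``seen'' set relevant to the bound consists only of processes that were undecided \emph{at the same time} as the committing process, so that $k$-concurrency caps it by $k$. If the particular variant of the algorithm in~\cite{ABDPR90,AW04} does not already have this property, I would instead use the \emph{adaptive} renaming construction (where a process's name is bounded by $2c-1$ with $c$ its concurrency degree, or $c + (\text{arrival rank}) - 1$ in the more refined versions) and combine the arrival-rank bound $\le j$ with the concurrency bound $c \le k$; this is the one place where some care, rather than a routine copy of the classical proof, is needed.
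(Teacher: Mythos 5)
Your plan matches the paper's proof: the paper runs exactly the modified classical algorithm you describe, in which each process repeatedly suggests the $r$-th free name with $r$ its rank among the \emph{not-yet-decided} participants (so $r\le k$ by $k$-concurrency) while at most $j-1$ names are ever excluded (by the participation bound $j$), yielding names at most $j+k-1$, and safety and liveness are argued as in the classical $(j,2j-1)$ case. The one point you flag as needing care --- that the rank must be computed only over concurrently undecided processes --- is precisely the modification the paper makes, by having decided processes flip a flag in their registers so they drop out of the rank computation.
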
   
\begin{proof}
Our algorithm, described in Figure~\ref{fig:kconcrenaming}, essentially 
mimics the algorithm of~\cite{ABDPR90,AW04} for wait-free 
$(j,2j-1)$-renaming.
\begin{figure}[tbp]
\hrule \vspace{1mm} {\small
\begin{tabbing}
 bbb\=bbB\=bbb\=bbb\=bbb\=bbb\=bbb\=bbb\=bbb\=bbb\=bbb\=bbb\=bbb \= \kill
Shared variables:\\
\> $R_\ell$, $\ell=1,\ldots,\nbfinal$, initially $\bot$\\ 
\\   
\nnll\label{line:init}\> $s := 1$ \\
\nnll\> \textbf{repeat forever}\\
\nnll\label{line:updatename}\>\> $R_i:=(i,s,\true)$ \>\>\>\>\>\>\>\>\>\>
/*\texttt{register new name}*/\\
\nnll\label{line:scannames}\>\>
$S:=\{p_\ell\;|\; R_\ell\neq\bot\}$ 
 \>\>\>\>\>\>\>\>\>\>
/*\texttt{collect suggested names}*/\\ 
\nnll\label{line:checkifalone}\>\> \textbf{if} $\exists (\ell,s_\ell,b)\in S$:
$i\neq \ell$ and $s=s_\ell$ \textbf{then}\\
\nnll\label{line:rank}\>\>\> $r := $ the rank of $i$ in
$\{\ell\:|\:(\ell,s_\ell,b)\in S, b=\true\}$\\ 
\>\>\>\>\>\>\>\>\>\>\>\> /*\texttt{rank among not yet decided participants}*/\\
\nnll\label{line:newname}\>\>\> $s := $ the $r$th integer not in
$\{s_\ell:|\:(\ell,s_\ell,b)\in S, i\neq \ell\}$\\
\>\>\>\>\>\>\>\>\>\>\>\> /*\texttt{suggest a new name among not yet suggested}*/\\
\nnll\>\> \textbf{else}\\
\nnll\>\>\> $R_i:=(i,s,\false)$\\ 
\nnll\>\>\> return $s$
\end{tabbing}
\vspace{-2mm}\hrule }
\caption{\small A $k$-concurrent $(j,j+k-1)$-renaming algorithm: code
  for each process $p_i$.}
\label{fig:kconcrenaming}
\end{figure}
  
In the algorithm, every process periodically selects a new name
according to the set of the names not yet suggested by other processes
and its rank among the set of currently not yet decided participants
(lines~\ref{line:rank} and~\ref{line:newname}).    

Note that since at most $j$ processes participate in every run, 
$p_i$ can observe at most $j-1$ names suggested by
other processes in line~\ref{line:scannames}. 
Furthermore, since in a $k$-concurrent run, $p_i$ can observe at
most $k$ not yet decided participants, its rank can be at most $k$.
Therefore, the highest name $p_i$ can suggest in
line~\ref{line:updatename} is $j+k-1$.

Now we show that no two processes output the same name.
Suppose, by contradiction, that $p_i$ and $p_j$ output the same name
$s$. Thus, both $p_i$ and $p_j$ previously suggested $s$ in
line~\ref{line:updatename}.
But since after than both processes read each other's registers after
that, at least one of them would see that $s$ has been suggested by
another process and thus would not decide---a contradiction.  

Finally,  we show that  every correct process eventually decides. 
Consider, by contradiction, an run $R$ in which a set of correct processes
$\{p_{j_1},\ldots,p_{j_t}\}$ (ordered by their ids) never decide.
We call these processes \emph{trying}.
We establish a contradiction by showing that $p_{j_1}$ must eventually
decide. Indeed, consider $R'$, a prefix of  $R$, in which only trying
processes take steps, and let $S$ be the set of names suggested 
by the processes not in $\{p_{j_1},\ldots,p_{j_t}\}$ (note that this set does
not change in $R$). 
Since, $p_{j_1}$ has the smallest rank among the trying processes (let us
denote it by $r$), eventually no trying process will ever suggest
the $r$th name not in $S$. Thus, $p_{j_i}$ eventually finds itself
to be the only process to suggest the name and decides---a contradiction.
\end{proof} 
From this result and Theorem~\ref{th:omega}, we can conclude:
\begin{theorem}
\label{appth:kconcren}
For all $1<k\leq j<m$, $(j,j+k-1)$-renaming  can be solved with $\Nomega_k$.
\end{theorem}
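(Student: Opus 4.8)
The plan is to combine the two ingredients already established. By Theorem~\ref{th:kconcrenaming}, for every pair $1<k\leq j<m$ the task $(j,j+k-1)$-renaming is $k$-concurrently solvable: the restricted algorithm of Figure~\ref{fig:kconcrenaming} has the property that all of its $k$-concurrent runs satisfy the task. By Theorem~\ref{th:omega}, every $k$-concurrently solvable task can be solved with $\Nomega_k$ in every environment. Instantiating $T:=(j,j+k-1)$-renaming in Theorem~\ref{th:omega} yields the claim immediately.

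Before invoking Theorem~\ref{th:omega}, I would verify that $(j,j+k-1)$-renaming genuinely fits the formal definition of a task from Section~\ref{sec:model}: the input and output vector sets are finite, prefix-closed, and the associated relation $\Delta$ satisfies the three closure conditions. This is routine: inputs range over bounded name domains with at most $j$ non-$\bot$ entries; the output predicate ``all non-$\bot$ names are distinct and lie in $\{1,\ldots,j+k-1\}$'' is clearly preserved under taking prefixes, and since there are at least $j+k-1\geq j$ available names and never more than $j$ participants, any valid partial output can be extended when a fresh participant arrives. Hence the hypotheses of Theorem~\ref{th:omega} apply verbatim, with $T$ equal to $(j,j+k-1)$-renaming.

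There is essentially no obstacle: all of the work is carried by Theorem~\ref{th:kconcrenaming} (which supplies the $k$-concurrent algorithm, achieving exactly the name-space size $j+k-1$ rather than the wait-free bound $2j-1$) and by Theorem~\ref{th:omega} (which is already stated for an arbitrary $k$-concurrently solvable task and for every environment, so no extra condition on $\E$ is needed). The only point requiring a moment of care is to keep the parameters aligned — the concurrency level used in Theorem~\ref{th:omega} must be the same $k$ for which Theorem~\ref{th:kconcrenaming} guarantees solvability — which is automatic here.
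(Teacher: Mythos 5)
Your proposal is correct and is exactly the paper's own argument: the paper derives Theorem~\ref{appth:kconcren} directly by combining Theorem~\ref{th:kconcrenaming} ($k$-concurrent solvability of $(j,j+k-1)$-renaming) with Theorem~\ref{th:omega} (any $k$-concurrently solvable task is solvable with $\Nomega_k$ in every environment). The extra check that renaming fits the formal task definition is a reasonable bit of diligence but adds nothing beyond what the paper takes for granted.
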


\end{document}